\documentclass[journal]{IEEEtran}
\makeatletter
\newcommand*{\rom}[1]{\expandafter\@slowromancap\romannumeral #1@}
\makeatother
\IEEEoverridecommandlockouts
\ifCLASSINFOpdf
\else
\fi

\usepackage{stfloats}
\usepackage{float}
\usepackage{subfloat}
\usepackage{graphics}
\usepackage{multicol}
\usepackage{lipsum}
\usepackage[cmex10]{amsmath}
\usepackage{amsthm}
\usepackage{mathrsfs}
\usepackage{mathtools}
\usepackage{amsbsy}
\usepackage{xcolor}
\usepackage{mathrsfs}
\usepackage{setspace}
\usepackage{graphicx}
\usepackage{lettrine}
\usepackage{newclude}
\usepackage[normalem]{ulem}
\usepackage{latexsym}
\usepackage{algpseudocode}
\usepackage{algorithm,algpseudocode}
\usepackage{algorithmicx}

\usepackage{multirow}
\usepackage{rotating}
\usepackage{booktabs}
\usepackage{wasysym}
\usepackage{mathrsfs}
\usepackage{bbm} 
\usepackage{filecontents}
\usepackage{soul}

\usepackage{optidef}

\newtheorem{theorem}{Theorem}
\newtheorem{lemma}{Lemma}

\newtheorem{definition}{Definition}

\ifCLASSOPTIONcompsoc
\usepackage[caption=false,font=normalsize,labelfon
t=sf,textfont=sf]{subfig}
\else
\usepackage[caption=false,font=footnotesize]{subfig}
\fi
\def\BibTeX{{\rm B\kern-.05em{\sc i\kern-.025em b}\kern-.08em
    T\kern-.1667em\lower.7ex\hbox{E}\kern-.125emX}}
\usepackage{amsmath,epsfig,cite,amsfonts,amssymb,psfrag,color}
\usepackage{epstopdf}

\def\BibTeX{{\rm B\kern-.05em{\sc i\kern-.025em b}\kern-.08em
    T\kern-.1667em\lower.7ex\hbox{E}\kern-.125emX}}

\newcommand{\abssOne}[1]{{\left\lvert{#1}\right\rvert}}

\newcommand{\abssSq}[1]{{\left\lvert{#1}\right\rvert}^2}

\newcommand{\abs}[1]{{\left\lvert{#1}\right\rvert}}

\newcommand{\norm}[1]{\left\lVert#1\right\rVert}

\begin{document}
\bstctlcite{IEEEexample:BSTcontrol}

	\title{Characterization of Beam-Squint and Beam-Split Effects in RIS-assisted Multi-Frequency Networks}
	\author{Mohammad Amin Saeidi, {\em Member IEEE},  Hina Tabassum, {\em Senior Member IEEE} 
    \thanks{
        M. A. Saeidi and H. Tabassum are with the Department of Electrical Engineering and Computer Science, York University, Toronto, Canada (emails:  
        \{{amin96a}, {hinat}\}@yorku.ca). This work was supported  by Natural Sciences and Engineering
Research Council of Canada (NSERC) Discovery grant.
        
	}}
\raggedbottom

\maketitle

\begin{abstract}
Multi-frequency operation in beyond-5G and 6G systems renders the beam directions of large-aperture arrays inherently frequency-dependent, giving rise to beam misalignment effects that are particularly critical for reconfigurable intelligent surfaces (RISs). When an RIS phase profile is configured at one frequency but illuminated at another, the reflected field may exhibit main-lobe deviation and the formation of additional dominant lobes, degrading beamforming performance. In this paper, we develop a unified analytical framework to rigorously characterize these effects for uniform planar array (UPA)-based RISs under both continuous and practical finite-resolution phase control. We formalize beam-squint as the deviation of the \textcolor{black}{maximum-gain} beam from the desired elevation–azimuth angle pair and beam-split as the emergence of additional \textcolor{black}{maximum-gain} beams under frequency mismatch and quantization. For continuous-phase RISs, we derive necessary and sufficient peak conditions, obtain closed-form elevation–azimuth peak families, and establish explicit feasibility conditions as functions of the frequency ratio, incidence and configuration angles, and element spacing. Extending the analysis to practical $b$-bit phase quantization, we express the quantized beampattern as a superposition of harmonic array responses via Fourier expansion of the quantization-error phasor, leading to tractable closed-form per-harmonic peak families, dominance ordering, and feasibility conditions. To capture peak displacement induced by harmonic superposition, we further develop a perturbation-based correction leveraging dominant-harmonic curvature and residual-harmonic gradients, improving peak-location accuracy without exhaustive two-dimensional searches. Numerical simulations validate the analysis, quantify peak deviations and beam-split probabilities, and \textcolor{black}{ demonstrate sum-rate gains obtained by including analytically predicted squinted/split and harmonic candidates in multi-RIS, multi-user, multi-band networks.}
\end{abstract}

\begin{IEEEkeywords}
Beam-split, beam-squint, harmonic decomposition, multi-band networks, peak characterization, phase quantization, Reconfigurable intelligent surface
\end{IEEEkeywords}

\section{Introduction}

Next-generation wireless networks are expected to operate across diverse transmission frequencies, either through wideband signaling around a designated carrier \cite{saeidi2024molecularabsorptionaware} or via simultaneous use of multiple carriers across distinct spectrum bands in multi-band networks (MBNs) \cite{MBN-Survey,MBN-Magazine,MBN-Mobility}. While these strategies enhance data rates and spectral flexibility, they also exacerbate frequency-dependent beamforming effects in large-aperture arrays and reconfigurable intelligent surfaces (RISs). When a beamforming profile, whether implemented at an active array or programmed at an RIS, is designed at a specific frequency, any deviation between the configuration and operating frequencies distorts the intended spatial response. This mismatch manifests as \textit{beam-squint}, where the main beam shifts from the desired direction, and \textit{beam-split}, where multiple \textcolor{black}{distinct maximum-gain} beams appear. These impairments intensify at upper mid-band, millimeter-wave, and terahertz frequencies, where large electrical apertures produce narrow beams that are inherently more sensitive to frequency variations. In highly directional transmission and reflection scenarios, such as RIS-assisted links, even minor frequency-induced angular deviations can translate into substantial array gain loss, undermining the expected benefits of wideband and multi-band operation. 
\begin{figure}
    \centering
    \includegraphics[width=1.03\linewidth]{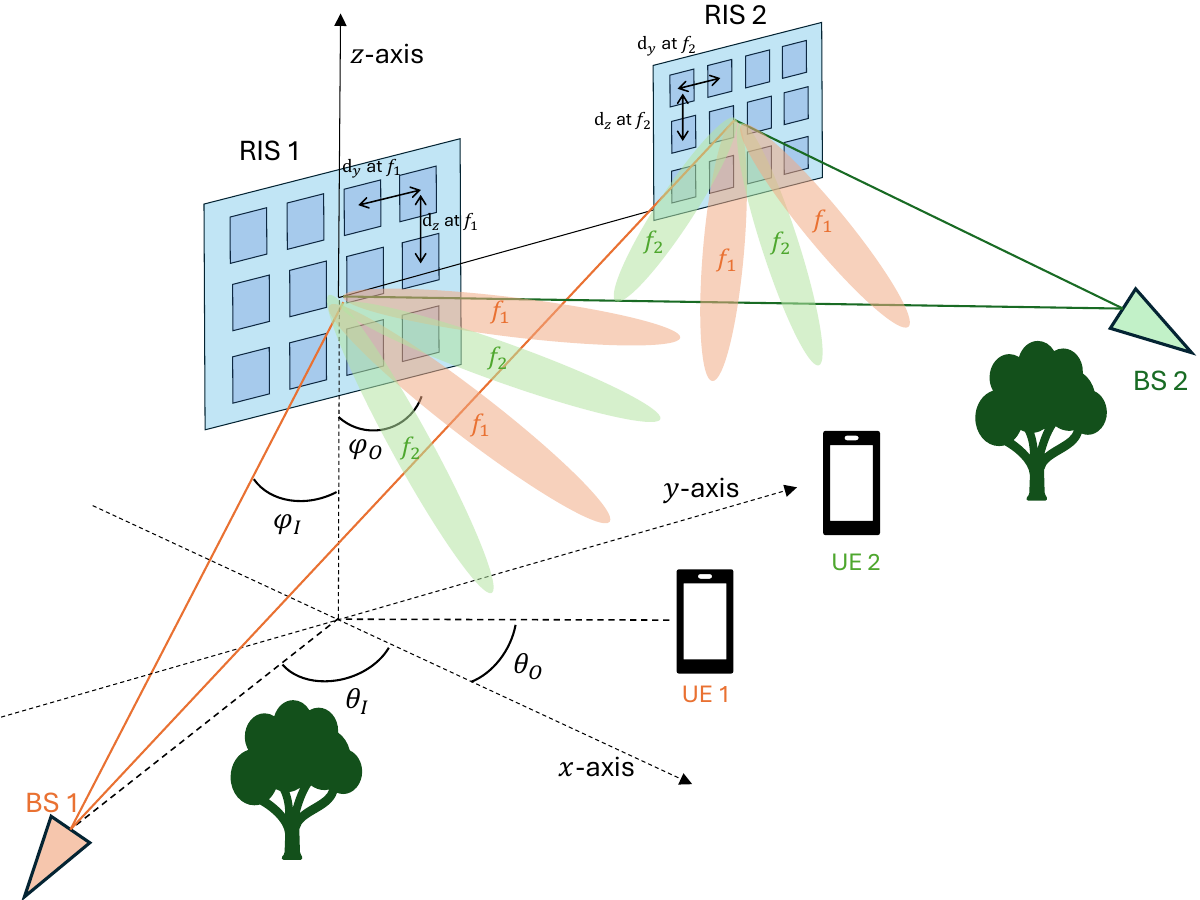}
    \caption{Two-RIS two-user multi-band network: RIS~1 (RIS~2) is configured at $f_1$ ($f_2$); colored beams illustrate beam-split under cross-band evaluation, with angles $(\varphi,\theta)$ defined by the shown coordinate axes.}

    \label{fig:SystemModel}
\end{figure}
\textcolor{black}{Consequently, multi-frequency systems require analytical tools capable of predicting how beam directions evolve with frequency. Such characterization is essential for frequency-aware beam management, wideband beamforming, and resource allocation, enabling efficient design without resorting to exhaustive angular scanning at every carrier.}

\subsection{Related Work}

A plethora of existing research works considered the problem of wideband beamforming in RIS-assisted networks. In this domain, the research studies compute beam-squint using a standard analytic expression assuming continuous RIS phase-profiles, and mitigate the beam-squint
via architectural mechanisms such as True-time-delay (TTD) structures \cite{su2023wideband,wang2024wideband,yan2024wideband, yan2023beamforming, sun2023beamforming, su2024joint,zhang2025near} and distributed deployment of RISs \cite{yan2023beamforming, sun2023beamforming, yashvanth2025mitigating}.

Beyond communication beamforming, beam-squint and beam-split analysis is also relevant to sensing and localization applications. The work in \cite{li2024user} leverages beam-squint and beam-split behavior in RIS-aided wideband mmWave sensing, where identifying frequency-dependent dominant directions is part of the sensing methodology, and delay structures are used to control the effect. Related integrated sensing and communication (ISAC) literature (not necessarily RIS-based) also highlights that squint-induced angle and range diversity can be exploited for sensing or localization \cite{luo2023beam,elbir2024antenna}. {More broadly, frequency-structured angular behavior can also arise from mechanisms beyond the intended RIS profile. For instance, \cite{kolomvakis2025nonlinear} obtained the angles and directions of nonlinear distortion components from large arrays/active RISs and used them for distortion-aware scheduling across frequency.}

Beam-squint effects are also relevant and typically incorporated into channel estimation by explicitly modeling frequency-dependent steering mismatch, since an angular dictionary built at a single carrier becomes mismatched across subcarriers and can bias angle recovery. For THz multiple-input multiple-output systems, \cite{elbir2023terahertz,elbir2023bsa} used frequency-dependent array responses to jointly recover sparse channel parameters and the associated subcarrier-dependent steering. For RIS-assisted THz systems, \cite{su2024channel,su2025two} redesigned channel recovery and training to account for subcarrier-dependent cascaded-channel steering, rather than assuming a frequency-invariant response. 
\textcolor{black}{
Optimization-based joint base station (BS)/RIS beamforming methods, e.g., \cite{jin2023low}, can also improve RIS-assisted transmission, but their goal is to optimize active/passive beamforming variables rather than characterize the frequency-dependent behavior of passive RISs.
} 

\subsection{Motivation and Contributions}

\textcolor{black}{Recent works have demonstrated the significance of frequency-dependent beam behavior analysis in various applications, such as wideband RIS beamforming \cite{su2023wideband,wang2024wideband,yan2024wideband, yan2023beamforming, sun2023beamforming, su2024joint,zhang2025near, yashvanth2025mitigating}, sensing/localization \cite{li2024user,luo2023beam,elbir2024antenna}, and channel estimation \cite{elbir2023terahertz,elbir2023bsa,su2024channel,su2025two}. However, these works often rely on mitigating or controlling these frequency-dependent beam misalignments through hardware or architectural modifications, such as TTD-based RISs or distributed RIS deployments. On the other hand, these studies predominantly consider continuous-phase RISs, whereas practical RIS implementations employ finite-resolution phase control \cite{bjornson2022reconfigurable}. This finite-resolution phase control introduces additional harmonic beams and fundamentally alters the frequency-dependent beam behavior. 
}

\textcolor{black}{This work addresses a complementary problem by establishing a theoretical characterization framework for \textit{conventional passive finite-resolution phase-only RISs}. Rather than mitigating beam squint through hardware modifications, the proposed framework derives closed-form predictions for the locations and existence conditions of squinted, split, and quantization-induced harmonic beams under frequency mismatch, thereby avoiding exhaustive two-dimensional angular searches. The angular locations of these beams can then be used for frequency-aware beam management in multi-frequency RIS operation, ISAC operation\footnote{\textcolor{black}{In ISAC operation, the predicted RIS-reflected maximum-gain directions can help reduce uncertainty in interpreting angular responses by identifying possible RIS-assisted communication/sensing directions and potential sources of angular ambiguity.}}, and to evaluate unintended leakage in multi-band or multi-operator scenarios.}

Specifically, in this paper, we mathematically formalize the beam-squint and beam-split considering uniform planar array (UPA) RISs. We define beam-squint as the deviation of the \textcolor{black}{maximum-gain} beam from the desired elevation–azimuth angle pair, and beam-split as the emergence of additional \textcolor{black}{distinct maximum-gain} beams when an RIS configuration designed at one frequency is evaluated at another. Specifically, we address the question of \textit{which elevation and azimuth angle pairs maximize the RIS beamforming gain under incident frequency mismatch, while considering both \textit{continuous and $b$-bit quantized phase} profiles?}

To this end, our contributions can be summarized as follows:

\noindent$\bullet$ We develop a unified mathematical framework to characterize the beam-squint and beam-split effects of UPA-RISs in the presence of both continuous phase and $b$-bit quantized phase profile.  For the quantization case, we propose a correction procedure to reduce the quantization-induced error in the beam-squint/split analytical estimates.

\noindent$\bullet$ For the continuous phase profile of the RIS, we derive necessary and sufficient conditions to maximize beamforming gain as well as feasibility conditions under which beam-squint and beam-split exist. We then obtain elevation--azimuth squint/split angle pairs in closed-form as a function of the incidence and configuration angles and frequencies, and element spacing of the UPA-RIS.

\noindent$\bullet$  For $b$-bit phase quantization, we derive the quantization-error phasor via its Fourier expansion, and transform the quantized UPA beampattern into an explicit superposition of harmonic array responses with affine phase slopes. This converts the non-affine quantized phase profile into tractable per-harmonic beamforming gain maximization problems. We then derive closed-form elevation--azimuth squint/split angle pairs, followed by deriving feasibility conditions, and a principled dominance ordering across harmonics.

\noindent$\bullet$  Since the per-harmonic beamforming gain maximizers (or peaks) can deviate from the true peaks of the \emph{superposed} beampattern, we develop a local perturbation analysis around each nominal harmonic peak and derive a closed-form deviation of the peak. This provides an analytically grounded refinement that improves peak-location characterization without exhaustive $2$D searches, especially for moderate RIS sizes.

\noindent $\bullet$ \textcolor{black}{We validate the derived squint/split expressions, feasibility conditions, and perturbation-based peak correction through Monte-Carlo simulations. We further show how the closed-form peak families can be used to construct compact beam-split-aware finite-codebook candidate sets in a multi-RIS, multi-user, multi-band network. The resulting codeword-selection method is evaluated under a line-of-sight (LoS)-dominant multipath channel and compared with home-link-only, squint-aware split-agnostic, and exhaustive finite-codebook search baselines.}

\textbf{Organization:} Section~II presents the system model and problem formulation. Section~III derives the elevation--azimuth squint/split angle pairs along with their feasibility conditions, considering the continuous-phase profile of the RIS. Section~IV extends the analysis to $b$-bit phase control via harmonic decomposition, and Section~V develops the perturbation-based peak correction under harmonic superposition. Section~VI depicts a beam-split-aware multi-RIS, multi-user, and multi-band network. Section~VII provides numerical validation, and Section~VIII concludes the paper.

\textbf{Notations:} Boldface lowercase/uppercase letters denote vectors/matrices. $\mathbb{R}$ and $\mathbb{C}$ denote the real and complex sets, respectively. $\abs{\cdot}$ is the absolute value and $\norm{\cdot}$ is the Euclidean norm; $\norm{\cdot}_F$ is the Frobenius norm. $(\cdot)^T$ and $(\cdot)^H$ denote transpose and conjugate transpose, $\Re\{\cdot\}$ the real part, and $(\cdot)^*$ complex conjugation. $\nabla(\cdot)$ denotes the gradient, and $\otimes$ denotes Kronecker products.

\section{System Model and Problem Statement}

Assume that a plane wave of frequency $f_I$ impinges on a UPA-based RIS from elevation angle $\varphi_I \in (-\frac{\pi}{2},\frac{\pi}{2})$ and azimuth angle $\theta_I \in (-\frac{\pi}{2},\frac{\pi}{2})$.
The array response vector can then be modeled to obtain the respective phase shifts among the RIS elements as follows:
\begin{align}
   \boldsymbol{a} &(\varphi_I,\theta_I,f_I,f_C) = [1,\dots,e^{j 2\pi(N_z-1)d_z\frac{ \sin\varphi_I}{\lambda_I}}]^T \notag \\ &
   \otimes [1,\dots,e^{j2\pi(N_y-1)d_y\frac{ \sin\theta_I\cos\varphi_I}{\lambda_I}}]^T
\end{align}
where $N_z$ and $N_y$ are the number of vertical and horizontal elements in the UPA, respectively, and $N = N_yN_z$ is the total number of elements. The wavelength of the incident signal is $\lambda_I = \frac{c}{f_I}$, with $c$ being the speed of light. $f_C$ is the design frequency of the RIS. The vertical element spacing is $d_z=\alpha_z \lambda_C = \frac{\alpha_z c}{f_C}$, and the horizontal element spacing is $d_y=\alpha_y \lambda_C = \frac{\alpha_y c}{f_C}$, where $\alpha_z > 0$ and $\alpha_y > 0$ are the factors controlling the element spacing. Hence,
the array response $\boldsymbol{a}(\varphi_I,\theta_I,f_I,f_C)$ depends on the frequency of the incident signal and the design frequency of the RIS. Although RIS elements may exhibit frequency-dependent phase responses \cite{abbas2024unit}, we assume frequency-independent unit-modulus elements in order to isolate the array-level beam-squint, beam-split, and quantization-induced harmonic effects. Consider the diagonal reflection matrix of the RIS as 
$\boldsymbol{\Psi}=\mathrm{diag}\left( e^{j\psi_1},\dots, e^{j\psi_N}\right)$, where $e^{j\psi_n}$ is the response of the $n$-th unit modulus element with the phase response denoted by $\psi_n \in \mathcal{F}_1= [0,2\pi)$ for the continuous case and $\psi_n \in \mathcal{F}_2= \{0,\Delta,\dots,(B-1)\Delta\}$ for the $b$-bit quantized RIS. $\Delta = \frac{2\pi}{B}$ denotes the quantization level and $B=2^b$.

The normalized beamforming gain of the RIS at an observation elevation angle of $\varphi_O$, and observation azimuth angle of $\theta_O$ is then given by \cite{su2023wideband}:
\begin{align}\label{eq:Arr-Gain}
    &u(\varphi_O,\theta_O,\boldsymbol{\Psi},\rho)  = \frac{1}{N}\abs{\boldsymbol{a}^T(\varphi_O,\theta_O,f_I,f_C)\boldsymbol{\Psi}\boldsymbol{a}(\varphi_I,\theta_I,f_I,f_C)} \notag \\ & = \frac{1}{N}\abs{\sum\limits_{n_y=0}^{N_y-1}\sum\limits_{n_z=0}^{N_z-1} e^{j\frac{2\pi}{\rho} (n_z\alpha_z\zeta_{I,O}+n_y\alpha_y\xi_{I,O})}  e^{j \psi_n}},
\end{align}
where $\rho = \frac{\lambda_I}{\lambda_C}=\frac{f_C}{f_I}$, $\zeta_{I,O} = \sin\varphi_I + \sin\varphi_O$, $\xi_{I,O} = \sin\theta_I\cos\varphi_I + \sin\theta_O\cos\varphi_O$, and $n = n_y + n_z N_y + 1$. \footnote{\textcolor{black}{The proposed analysis focuses on conventional single-layer passive UPA-RISs with diagonal phase responses. Extensions to stacked or beyond-diagonal RIS architectures require different effective aperture models and are left for future work.}}

In the following optimization problem, our objective is to identify all elevation and azimuth angle pairs $\{(\varphi_O^{*}, \theta_O^{*})\}$ at which the RIS beamforming gain attains its maximum. 
\begin{equation}\label{eq:SQ-Problem}
    \{(\varphi_O^{*}, \theta_O^{*})\} = \arg\max_{\varphi_O, \theta_O} \, u(\varphi_O, \theta_O, \boldsymbol{\Psi}, \rho),
\end{equation}
where $\boldsymbol{\Psi}$ is calculated by substituting $\rho = 1$ in the phases of the complex exponential given in \eqref{eq:Arr-Gain} and equating them to a constant phase $\phi_0$ as shown below, thus maximizing the beamforming  gain:
\begin{align}\label{eq:RIS-phases-rho-1} 
    2\pi(n_z\alpha_z\zeta_{I,D} + n_y\alpha_y\xi_{I,D}) + \psi_{n_y,n_z} \!\!\equiv \!\phi_{0,0},  \forall n_y,n_z, 
\end{align}
where $\phi_{0,0} \equiv \phi_0 \pmod{2\pi}$, which ensures that the phases of all complex exponential terms in \eqref{eq:Arr-Gain} are aligned to $\phi_0$. Also, $\zeta_{I,D}$ and $\xi_{I,D}$ are obtained based on $(\varphi_D, \theta_D)$, which denotes the design reflection angles of the RIS at frequency $f_C$. In particular, when $f_I = f_C$, i.e., $\rho = 1$, the maximum beamforming gain occurs at $(\varphi_D, \theta_D) \in \{(\varphi_O^*, \theta_O^*)\}$. \textcolor{black}{
In (3), the RIS phase profile is not an optimization variable and is fixed by the configuration rule in (4). The maximization is performed over the observation angles $(\phi_O,\theta_O)$ to identify the RIS-reflected maximum-gain beams generated by that fixed profile.
}
In the subsequent sections, we aim to solve the problem in \eqref{eq:SQ-Problem}, where $\boldsymbol{\Psi} \in \mathcal{F}_1$ corresponds to the continuous-phase setting and is obtained directly from \eqref{eq:RIS-phases-rho-1}, and $\boldsymbol{\Psi} \in \mathcal{F}_2$ corresponds to the quantized-phase setting, where $\boldsymbol{\Psi}$ is obtained by applying nearest-neighbor quantization to \eqref{eq:RIS-phases-rho-1}.

{\color{black}
Throughout this paper, the term ``dominant lobe'' is used in a maximum-gain sense. Dominant lobes are the feasible angle pairs that attain the maximum value of the normalized beamforming gain in \eqref{eq:SQ-Problem}. Therefore, a single maximum-gain angle pair, different from $(\varphi_D,\theta_D)$, corresponds to beam-squint, whereas two or more distinct maximum-gain angle pairs correspond to beam-split. 
}

\section{Beam-squint and Beam-split Existence and Characterization:  Continuous RIS Phase Profiles}

In this section, we first derive the necessary and sufficient conditions to maximize the beamforming gain. We then derive the squint/split angle pairs for the continuous RIS phase profile case, i.e., $\boldsymbol{\Psi} \in \mathcal{F}_1$, and analyze the beam-squint/split existence conditions for both elevation and azimuth directions. 

\subsection{Beam-squint and Beam-split Characterization}
We aim to identify all sets of observation angles $\{(\varphi_O^{*}, \theta_O^{*})\}$ for which the maximum beamforming gain is achieved when the RIS phase profiles are fixed continuous values and designed for $f_C$, but the frequency of the incident signal $f_I \neq f_C$, i.e., $\rho \neq 1$. The total phase at element $(n_y,n_z)$ becomes:
\begin{equation}\label{eq:Total-Phase-rhoNot1}
    \phi_{n_y,n_z}=\frac{2\pi}{\rho}(n_z\alpha_z \zeta_{I,O} + n_y\alpha_y \xi_{I,O}) + \psi_{n_y,n_z}.
\end{equation}
By plugging \eqref{eq:RIS-phases-rho-1} into \eqref{eq:Total-Phase-rhoNot1}, the total phase is obtained by:
\begin{align}\label{eq:Total-Phase-rhoNot1-RIS-at-rho1}
&\phi_{n_y, n_z}  =\frac{2 \pi}{\rho}\left(n_z \alpha_z \zeta_{I, O}+n_y \alpha_y \xi_{I, O}\right)+\phi_{0,0} \notag \\ 
& -2 \pi\left(n_z \alpha_z \zeta_{I, D}+n_y \alpha_y \xi_{I, D}\right)= \notag \\
 2 \pi&\left[n_z \alpha_z\left(\frac{1}{\rho} \zeta_{I, O}-\zeta_{I, D}\right)+n_y \alpha_y\left(\frac{1}{\rho} \xi_{I, O}-\xi_{I, D}\right)\right] \!+\! \phi_{0,0},
\end{align}
Next, to solve \eqref{eq:SQ-Problem} in this scenario, we first present the following Lemma. 
\begin{lemma}\label{lem:PhaseDiff}
    Using the triangle inequality, the beamforming gain can be bounded as follows: $$\tilde{u}=\abs{ \sum\limits_{n_y=0}^{N_y-1}\sum\limits_{n_z=0}^{N_z-1} e^{j\phi_{n_y,n_z}} } \leq  \sum\limits_{n_y=0}^{N_y-1}\sum\limits_{n_z=0}^{N_z-1} \abs{ e^{j\phi_{n_y,n_z}} } = N_yN_z,$$ where $\tilde{u} = N u$, and equality holds iff $\phi_{n_y+1,n_z}-\phi_{n_y,n_z}=2\pi m_y$ and $\phi_{n_y,n_z+1}-\phi_{n_y,n_z}=2\pi m_z$, for some integer values of $m_y,m_z \in \mathbb{Z}$; i.e., any difference between two neighboring phases must be an integer multiple of $2\pi$.
\end{lemma}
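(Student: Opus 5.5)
The bound itself is the triangle inequality applied to the finite sum of unit-modulus complex numbers $e^{j\phi_{n_y,n_z}}$. Since $\abs{e^{j\phi}}=1$, the right-hand side equals $N_yN_z$. So the substance of the lemma is the characterization of equality, and I would prove the two directions separately.

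\emph{Sufficiency.} Suppose every horizontal and vertical neighboring phase difference is an integer multiple of $2\pi$. I would chain these differences along a path from $(0,0)$ to any $(n_y,n_z)$: first move along $n_y$ in row $n_z=0$, then along $n_z$. This gives $\phi_{n_y,n_z}\equiv\phi_{0,0}\pmod{2\pi}$ for all indices. Hence every term equals $e^{j\phi_{0,0}}$, the sum is $N_yN_z e^{j\phi_{0,0}}$, and its modulus is $N_yN_z$.

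\emph{Necessity.} This is the step needing care, and I would use the standard equality case of the triangle inequality for complex numbers. Let $S=\sum e^{j\phi_{n_y,n_z}}$ and assume $\abs{S}=N_yN_z$, so $S\neq 0$. Write $S=\abs{S}e^{j\beta}$. Then
\begin{align*}
N_yN_z=\abs{S}=\Re\{e^{-j\beta}S\}=\sum_{n_y,n_z}\cos(\phi_{n_y,n_z}-\beta).
\end{align*}
Each cosine is at most $1$ and there are $N_yN_z$ terms, so every term must equal $1$. Therefore $\phi_{n_y,n_z}\equiv\beta\pmod{2\pi}$ for all $(n_y,n_z)$.

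Subtracting neighbors then gives $\phi_{n_y+1,n_z}-\phi_{n_y,n_z}=2\pi m_y$ and $\phi_{n_y,n_z+1}-\phi_{n_y,n_z}=2\pi m_z$ with $m_y,m_z\in\mathbb{Z}$. Strictly, the integers could a priori depend on $(n_y,n_z)$; the statement only asserts that such integers exist.

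\emph{Remark on integer dependence.} For the affine phase in \eqref{eq:Total-Phase-rhoNot1-RIS-at-rho1}, the differences are constant in the indices. They equal $2\pi\alpha_y(\xi_{I,O}/\rho-\xi_{I,D})$ and $2\pi\alpha_z(\zeta_{I,O}/\rho-\zeta_{I,D})$, so $m_y$ and $m_z$ are automatically uniform across the array. I would note this because it is how the lemma will be used later.

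The only mild obstacle is being careful that equality forces a \emph{common} phase modulo $2\pi$ rather than merely pairwise alignment. The real-part argument above handles this cleanly. Multiplying the result by $1/N$ gives $u\le 1$, with equality under the same conditions.
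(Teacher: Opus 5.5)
Your proposal is correct and follows essentially the same route as the paper's Appendix~A: sufficiency chains the unit-step increments from $(0,0)$ along a row and then a column to get $\phi_{n_y,n_z}\equiv\phi_{0,0}\pmod{2\pi}$, and necessity rests on equality forcing a common phase. The paper simply asserts that equality case, so your $\Re\{e^{-j\beta}S\}=\sum\cos(\phi_{n_y,n_z}-\beta)$ argument, together with your remark that $m_y,m_z$ are uniform only because the phase in \eqref{eq:Total-Phase-rhoNot1-RIS-at-rho1} is affine, supplies detail the paper leaves implicit.
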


\begin{proof}
See \textbf{Appendix A}.
\end{proof}
{The equality conditions in Lemma~\ref{lem:PhaseDiff}} can be rewritten using \eqref{eq:Total-Phase-rhoNot1-RIS-at-rho1} as follows: 
\small
\begin{equation}\label{eq:z-peak-cond}
   m_z= \frac{\phi_{n_y,n_z+1}-\phi_{n_y,n_z}}{2\pi}=  \alpha_z\left(\frac{1}{\rho}\zeta_{I,O}-\zeta_{I,D}\right), \ m_z \in \mathbb{Z},
\end{equation}

\begin{equation}\label{eq:y-peak-cond}
    m_y = \frac{\phi_{n_y+1,n_z}-\phi_{n_y,n_z}}{2\pi}=\alpha_y\left(\frac{1}{\rho}\xi_{I,O}-\xi_{I,D}\right), \ m_y \in \mathbb{Z}.
\end{equation} \normalsize
Substituting $\zeta_{I,O} = \sin\varphi_I + \sin\varphi_O$ and $\zeta_{I,D} = \sin\varphi_I + \sin\varphi_D$ into \eqref{eq:z-peak-cond}, 
and then solving for $\varphi_O$, we obtain:
\begin{equation}\label{eq:varPhi-Sq}
    \varphi_O^{*} = \arcsin\left(\rho\left(\zeta_{I,D}+\frac{m_z}{\alpha_z}\right)-\sin\varphi_I\right),
\end{equation}

Moreover, substituting $\xi_{I, O}=\sin \theta_I \cos \varphi_I+\sin \theta_O \cos \varphi_O$ and $\xi_{I, D}=\sin \theta_I \cos \varphi_I+\sin \theta_D \cos \varphi_D$ into \eqref{eq:y-peak-cond}, and 
given $\varphi_O^*$ from \eqref{eq:varPhi-Sq} and defining $\cos \varphi_O^*=\sqrt{1-\sin ^2 \varphi_O^*} > 0$, $\theta_O^*$ is obtained as follows:
\begin{equation}\label{eq:Theta-Sq}
    \theta_O^* = \arcsin\left(\frac{\rho\left(\xi_{I,D}+\frac{m_y}{\alpha_y}\right)-\sin \theta_I \cos \varphi_I}{\sqrt{1-\left(\rho\left(\zeta_{I,D}+\frac{m_z}{\alpha_z}\right)-\sin\varphi_I\right)^2}}\right).
\end{equation}
Note that we can identify the \textit{beam-squint} by setting {$m_z = 0$ and $m_y = 0$ in \eqref{eq:varPhi-Sq} and \eqref{eq:Theta-Sq}, respectively}. \textcolor{black}{Other feasible integer pairs $(m_z,m_y)\neq(0,0)$ correspond to additional maximum-gain lobes of the same continuous-phase array factor and are therefore interpreted as beam-split lobes, not sidelobes.}


\subsection{Beam-Squint/Split Existence Conditions (Elevation Angles)}\label{Sec:Cont-Feas-Conditions}
It is important to identify the conditions under which beam-squint and beam-split occur, given the frequency ratio $\rho$ and the element spacing factors $\alpha_y$ and $\alpha_z$. {The feasibility conditions can be obtained by limiting the arguments of $\arcsin(\cdot)$ within the interval of $[-1,1]$.} Let $X_z$ denotes the argument of the $\arcsin(\cdot)$ function in \eqref{eq:varPhi-Sq}. By enforcing the condition $\abs{X_z} \leq 1$ and solving for $m_z$, we obtain the following integer set of feasible peaks, i.e.,
\begin{equation}\label{eq:Elev-Feas}
{\mathcal{M}_z} = \{m_z \in \mathbb{Z}: L_z \leq m_z \leq U_z\},
\end{equation}
where
\begin{equation}\label{eq:Feas-interv-LowB-z}
    L_z = \frac{\alpha_z}{\rho} \left( \sin\varphi_I - 1 - \rho \zeta_{I,D} \right),
\end{equation}
\begin{equation}\label{eq:Feas-interv-UpB-z}
    U_z = \frac{\alpha_z}{\rho} \left( \sin\varphi_I + 1 - \rho \zeta_{I,D} \right).
\end{equation}
Therefore, for a fixed pair $(\varphi_I, \varphi_D)$, the set of feasible elevation indices ($m_z$) consists of the integers contained within the real interval $[L_z, U_z]$. The length of the interval is given by
$U_z - L_z = \frac{2\alpha_z}{\rho},$
and its center is given as
$\frac{L_z + U_z}{2} = \frac{\alpha_z}{\rho} \left( \sin\varphi_I - \rho \left( \sin\varphi_I + \sin\varphi_D \right) \right).$ Based on this analysis, we establish the following  feasibility criteria:

\begin{itemize}
    \item If $\frac{2\alpha_z}{\rho} < 1$, the interval is shorter than 1, so for any fixed $(\varphi_I, \varphi_D)$, there can be \textit{at most} one integer $m_z$ in $[L_z, U_z]$ which represents beam-squint. However, \textit{no beam-split} occurs in the elevation direction. Moreover, depending on the center of the interval $[L_z, U_z]$, it is possible to have $\mathcal{M}_z = \emptyset$, and neither beam-squint nor beam-split can occur.

 \item  If $1 \leq \frac{2\alpha_z}{\rho} < 2$, the interval length is \textit{at least} 1 and less than 2, so for every $(\varphi_I, \varphi_D)$, there is at least one integer $m_z$ guaranteed in the interval which is referred to as beam-squint. On the other hand, {depending on the center, there may be two integers within the interval. In this case, both \textit{beam-split} and beam-squint occur.}

 \item  If $\frac{2\alpha_z}{\rho} \geq 2$: the interval always contains at least two integers, regardless of its center, so both  beam-squint and beam-split always occur.

\end{itemize}

We can determine the existence of beam-squint regardless of $\alpha_z$ and $(\varphi_I,\varphi_D)$ by
substituting $m_z = 0$ in \eqref{eq:varPhi-Sq}. The argument of the $\arcsin(\cdot)$ function is then given by $X_z^{(0)} = (\rho - 1)\sin\varphi_I + \rho \sin\varphi_D$. Applying the triangle inequality to $X_z^{(0)}$, we obtain:
\begin{equation}\label{eq:Elev-BeamSq-Cond}
\abs{X_z^{(0)}} \leq \abs{\rho - 1} \abs{\sin\varphi_I} + \rho \abs{\sin\varphi_D} \leq \abs{\rho - 1} + \rho.
\end{equation}

In \eqref{eq:Elev-BeamSq-Cond}, if $\rho < 1$, then $\abs{\rho - 1} + \rho = 1$. Consequently, for all $(\varphi_I, \varphi_D)$, we have $\abs{X_z^{(0)}} \leq 1$, and therefore $m_z = 0 \in \mathcal{M}_z$. 
On the other hand, for $\rho > 1$, beam-squint in elevation occurs ($m_z = 0 \in \mathcal{M}_z$) if the angle pair $(\varphi_I, \varphi_D)$ satisfies $\abs{X_z^{(0)}} \leq 1$.

\subsection{Beam-Squint/Split Existence Conditions (Azimuth Angles)}

To analyze the feasibility of beam-squint and beam-split in the azimuth direction, we assume that a feasible peak in the elevation direction exists for some $m_z$.
Let us define the numerator of the argument of $\arcsin(\cdot)$ in \eqref{eq:Theta-Sq} as:
$$
X_y = \rho\left(\sin \theta_I \cos \varphi_I + \sin \theta_D \cos \varphi_D + \frac{m_y}{\alpha_y} \right) - \sin \theta_I \cos \varphi_I,
$$
and let $A_O = \cos\varphi_O^* = \sqrt{1 - \sin^2\varphi_O^*}$, where $A_O \in (0,1]$.
Hence, the argument of the $\arcsin(\cdot)$ function in \eqref{eq:Theta-Sq} becomes $\frac{X_y}{A_O}$, and the azimuth angle $\theta_O^*$ is feasible if $\frac{\abs{X_y}}{A_O} \leq 1$.

Solving $\frac{\abs{X_y}}{A_O} \leq 1$ for $m_y$, we obtain the following integer set of feasible peaks, i.e.,
\begin{equation}\label{eq:Azim-Feas}
    \mathcal{M}_y = \{m_y \in \mathbb{Z}: L_y \leq m_y \leq U_y\},
\end{equation}
with
\begin{equation}\label{eq:Feas-interv-LowB-y}
    L_y = \frac{\alpha_y}{\rho} \left( \sin\theta_I\cos\varphi_I - \rho \xi_{I,D} - A_O \right),
\end{equation}
\begin{equation}\label{eq:Feas-interv-UpB-y}
    U_y = \frac{\alpha_y}{\rho} \left( \sin\theta_I\cos\varphi_I - \rho \xi_{I,D} + A_O \right).
\end{equation}

Therefore, for fixed pairs of $(\varphi_I, \varphi_D)$, $(\theta_I, \theta_D)$, and for a feasible $\varphi_O^*$, the set of feasible azimuth indices ($m_y$) consists of the integers within the real interval $[L_y, U_y]$. The length of this interval is $U_y - L_y = \frac{2\alpha_y}{\rho} A_O$, and its center is given by $
\frac{L_y + U_y}{2} = \frac{\alpha_y}{\rho} \left( \sin\theta_I\cos\varphi_I - \rho \xi_{I,D} \right)$.

Using $A_O \in (0,1]$ and the length of the interval, we establish the following  feasibility criteria:
\begin{itemize}
    \item If $\frac{2\alpha_y}{\rho}A_O < 1 \: \forall \varphi_O^*$, \textit{at most} one feasible solution exists which is referred to as beam-squint, and beam-split never occurs. Depending on the center of the interval $[L_y, U_y]$, it is possible to have $\mathcal{M}_y = \emptyset$, and therefore, neither beam-squint nor beam-split occurs.

    \item If $1 \leq \frac{2\alpha_y}{\rho}A_O < 2$, the length of the interval is \textit{at least} 1 and less than 2, so beam-split may occur depending on the center of the interval. {Depending on the center, there may be two integers within the interval. In this case, both \textit{beam-split} and beam-squint occur.}

    \item If $\frac{2\alpha_y}{\rho}A_O \geq 2$, the interval contains at least two integers, so both beam-squint and beam-split always occur.
\end{itemize}

 The feasibility of the main azimuth for $m_y = 0$ requires $\frac{\abs{X_y^{(0)}}}{A_O} \leq 1$, where $X_y^{(0)} = (\rho - 1)\sin\theta_I\cos\varphi_I + \rho \sin\theta_D\cos\varphi_D$. This condition implies that $\abs{X_y^{(0)}} \leq A_O = \cos\varphi_O^*$, which depends on all angles as well as $\rho$. Unlike in the elevation case, even when $\rho \leq 1$, azimuth squint is not guaranteed to occur, since $A_O$ can be strictly less than 1.

\begin{figure}
    \centering
    \includegraphics[width=0.95\linewidth]{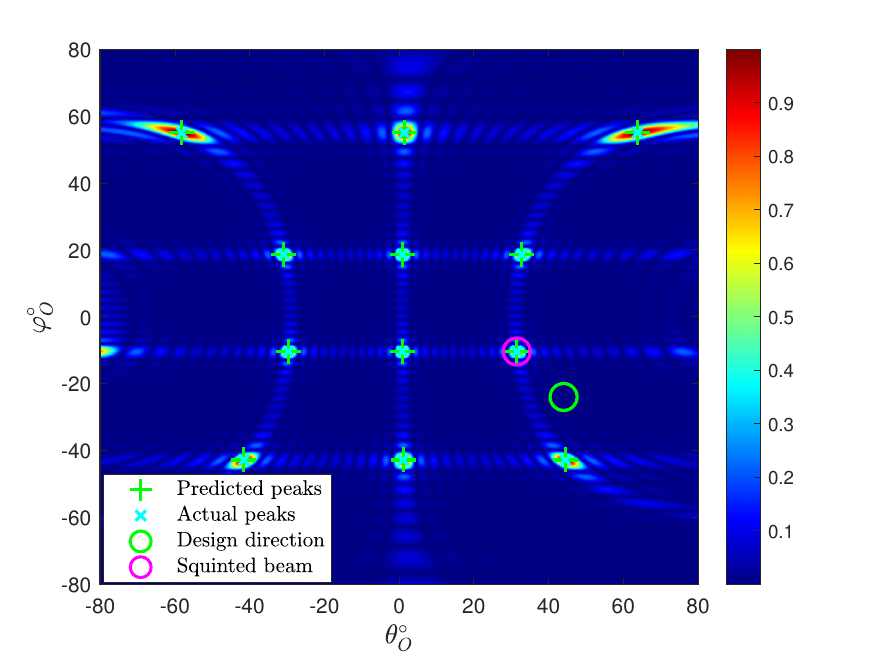}
    \caption{\textcolor{black}{Continuous RIS beampattern under frequency mismatch for $\rho=0.75$, $\alpha_y=\alpha_z=1.5$, $N_y=12$, $N_z=12$, with incidence $(\varphi_I,\theta_I)=(-30^\circ,-10^\circ)$ and design direction $(\varphi_D,\theta_D)=(-24^\circ,44^\circ)$. The green circle marks the design direction, while the magenta circle marks the squinted beam corresponding to $(m_z,m_y)=(0,0)$. The remaining maximum-gain peaks correspond to beam-split lobes.}}
\label{fig:TwoD_Corr_PerHar_Actu_LargeN_Many_BSP_Cont}
\end{figure}

\textbf{Remark:} {
For the continuous RIS phase profile, Fig.~\ref{fig:TwoD_Corr_PerHar_Actu_LargeN_Many_BSP_Cont} illustrates an example of the beam-squint and beam-split angle pairs obtained by the proposed framework, together with the actual peaks obtained by exhaustive search. This framework utilizes feasibility analysis based on $\mathcal{M}_z$ and $\mathcal{M}_y$ to limit the search space over the integer pairs $(m_z,m_y)$.
}

\section{Beam-squint and Beam-split Existence and Characterization: $b$-bit RIS Phase Profiles}
\label{sec:quant-harmonics}
In practice,  the phase shifts of the RIS elements are typically not continuous, thus cannot take arbitrary values from the set $\mathcal{F}_1$. Therefore, in this section, we  solve the problem in \eqref{eq:SQ-Problem} considering $b$-bit quantized RIS, i.e., for $\boldsymbol{\Psi} \in \mathcal{F}_2$.

In the continuous-phase case, the per-element phase is exactly affine in $(n_y,n_z)$, so the array sum achieves its maximum only when all unit-step phase increments are integer multiples of $2\pi$ (as given in \textbf{Lemma~\ref{lem:PhaseDiff}}). In contrast, under $b$-bit quantization, each element incurs an additional RIS phase shift error $\,\varepsilon(\psi_{n_y,n_z}) \in [-\Delta/2,\Delta/2)$, {which depends on the RIS element index $(n_y,n_z)$ and is no longer affine in $(n_y,n_z)$.} As a result, \textbf{Lemma~\ref{lem:PhaseDiff}} can no longer be utilized directly on the quantized array sum.

Subsequently, in this section, considering the Fourier series of the RIS quantization error phasor, we reformulate the beam pattern as a superposition of harmonic beam patterns and then obtain the closed-form beam-squint/split angle pairs that maximize each harmonic. We then analyze the dominance ordering of the harmonics and study the existence of per-harmonic beam-squint/split.

\subsection{Beam Pattern with $b$-bit Quantized RIS}
Considering $b$-bit nearest-neighbor quantization, we define the quantization function, which takes the continuous phase (with $\rho = 1$) from \eqref{eq:RIS-phases-rho-1} as input, as follows:
\begin{equation}\label{eq:quantizer}
    Q_b(\psi_{n_y,n_z}) = \Delta \left\lfloor \frac{\psi_{n_y,n_z}}{\Delta} \right\rceil,
\end{equation}
where $\lfloor x \rceil$ denotes the \textit{round} function.

We then define the {RIS quantization phase error function} as follows:
\begin{equation}\label{eq:phaseErrFunc}
    \varepsilon(\psi_{n_y,n_z}) \!= \!Q_b(\psi_{n_y,n_z}) - \psi_{n_y,n_z}\!\! =\! \Delta\! \left( \left\lfloor \frac{\psi_{n_y,n_z}}{\Delta} \right\rceil \!-\! \frac{\psi_{n_y,n_z}}{\Delta} \!\right)\!.
\end{equation}
Accordingly, the {beam pattern $S$} is obtained as:
\begin{align}\label{eq:BeamPattern_QuantErr}
    S &= \!\!\!\! \sum\limits_{n_y=0}^{N_y-1}\sum\limits_{n_z=0}^{N_z-1} e^{j\left( \frac{2\pi}{\rho}(\alpha_z\zeta_{I,O}n_z + \alpha_y\xi_{I,O}n_y) + Q_b(\psi_{n_y,n_z}) \right)} \notag \\
    &= \sum\limits_{n_y=0}^{N_y-1}\sum\limits_{n_z=0}^{N_z-1} e^{j\left( 2\pi(\delta_z n_z + \delta_y n_y) + \varepsilon(\psi_{n_y,n_z}) \right)},
\end{align}
where $\delta_z = \alpha_z\left(\frac{1}{\rho}\zeta_{I,O}-\zeta_{I,D}\right)$ and $\delta_y = \alpha_y\left(\frac{1}{\rho}\xi_{I,O}-\xi_{I,D}\right)$ are obtained by substituting $Q_b(\psi_{n_y, n_z})$ from \eqref{eq:quantizer} and performing some algebraic manipulations.

In particular, we define the RIS quantization error phasor by $g(x) = e^{j\varepsilon(x)}$, with $\varepsilon(x) = \Delta(\lfloor x\rceil - x)$, and for ease of exposition, we define $x$ by:
\begin{equation}\label{eq:Normalized-RIS-Phase}
x = \frac{\psi_{n_y,n_z}}{\Delta} = -B(n_z \alpha_z \zeta_{I,D} + n_y \alpha_y \xi_{I,D}) + x_0,
\end{equation}
where $x_0 = \frac{\phi_0}{\Delta}$. However, as we show below, the RIS quantization error phasor function  {has Fourier series representation}. Expanding this function into its harmonics decomposes the quantization effect into a weighted sum of array responses with \emph{purely linear} phases. For each harmonic, the phase is again affine in $(n_y,n_z)$, thus \textbf{Lemma~\ref{lem:PhaseDiff}} becomes applicable on a per-harmonic basis.

Since we aim to utilize the Fourier series of $g(x)$ in our analysis, it is necessary to show that the Fourier series of $g(x)$ exists and converges. This is given in \textbf{{Lemma-2}}.

\begin{lemma}\label{lem:FourierConve}
    The function $g: \mathbb{R}\rightarrow \mathbb{C}$, defined as $g(x) = e^{j\Delta(\lfloor x\rceil - x)}$, is {1-periodic and of bounded variation over one period}. Moreover, its Fourier series converges pointwise to $\frac{1}{2}(g(x^+)+g(x^-))$, where $g(x^-)$ and $g(x^+)$ denote the left- and right-hand limits of $g$ at $x$, respectively. 
\end{lemma}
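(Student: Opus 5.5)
We will establish the three claims of Lemma~\ref{lem:FourierConve} in order: periodicity, bounded variation, and then pointwise convergence via the Dirichlet--Jordan theorem.

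\emph{Periodicity.} Since $\lfloor x+1\rceil = \lfloor x\rceil + 1$ for every real $x$, we have $\varepsilon(x+1)=\Delta(\lfloor x\rceil+1-x-1)=\varepsilon(x)$. Hence $g(x+1)=g(x)$.

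\emph{Bounded variation.} We work on the period $[-\tfrac12,\tfrac12)$. There $\lfloor x\rceil=0$, with only the endpoint convention for the rounding tie mattering, and it affects a single point. So $g(x)=e^{-j\Delta x}$ on this interval. This is a smooth curve on the unit circle, with $|g'(x)|=\Delta$. Its total variation over the open interval is therefore at most $\int_{-1/2}^{1/2}\Delta\,dx=\Delta$. The only discontinuity in a period is at $x=\pm\tfrac12$, where the left and right limits are $e^{-j\Delta/2}$ and $e^{j\Delta/2}$. This jump has modulus $|e^{j\Delta/2}-e^{-j\Delta/2}|=2\sin(\Delta/2)\le 2$. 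The tie-point value contributes at most two further jumps of modulus at most $2$ each. Summing these pieces, the total variation of $g$ over one period is at most $\Delta+6<\infty$.

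To handle complex values cleanly, we apply this bound to $\Re g$ and $\Im g$ separately. Each is a real function of bounded variation: $\cos(\Delta x)$ and $-\sin(\Delta x)$ are smooth on the open interval, and there are finitely many jumps.

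\emph{Convergence.} A periodic, locally integrable real function of bounded variation satisfies the Dirichlet--Jordan theorem. Its Fourier partial sums converge at every $x$ to $\tfrac12\big(f(x^+)+f(x^-)\big)$. Applying this to $\Re g$ and $\Im g$ and recombining by linearity gives the stated limit $\tfrac12\big(g(x^+)+g(x^-)\big)$ for the complex series. Integrability holds because $|g|=1$, so the coefficients $c_k=\int_{-1/2}^{1/2}e^{-j\Delta x}e^{-j2\pi kx}dx$ exist. We may also compute them explicitly as $c_k=\mathrm{sinc}$-type expressions in $(\Delta/2\pi+k)$, which will be used later.

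\emph{Main obstacle.} The only delicate points are the rounding convention at half-integers and the reduction of the complex-valued case to the real Dirichlet--Jordan theorem. Neither is a genuine difficulty. The essential content is recognizing that $g$ is piecewise smooth with a single jump per period.
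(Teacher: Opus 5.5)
Your proof is correct and follows the same route as the paper's Appendix~B: periodicity from $\lfloor x+1\rceil=\lfloor x\rceil+1$, bounded variation because $g$ is a smooth unit-circle arc with a single jump of size $2\sin(\Delta/2)$ per period, and pointwise convergence from the Dirichlet--Jordan test. The differences are cosmetic. You center the period at $[-\tfrac12,\tfrac12)$ and bound the arc by $\int\abs{g'}$, whereas the paper splits $[0,1]$ at $\tfrac12$ and uses $\abs{e^{ju}-e^{jv}}\le\abs{u-v}$. Your bound $\Delta+6$ is also cruder than the paper's $\abs{\Delta}+2\abs{\sin\tfrac{\Delta}{2}}$, since the tie value always equals one of the one-sided limits, so no extra jumps arise. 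Finally, the $\Re g$/$\Im g$ split is harmless but unnecessary, because Dirichlet--Jordan applies to complex-valued functions directly.
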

\begin{proof}
See \textbf{Appendix B}.
\end{proof}
Next, we obtain the Fourier series of $g(x)$ in the following.
\begin{lemma}[Fourier series of $g$]\label{lem:FourierSeriesCoeff}
Let $g(x)=e^{j\,\Delta(\lfloor x\rceil-x)}$ with period $1$. {For harmonic $\ell\in\mathbb{Z}$, we define the Fourier series coefficients as follows:}
\[
\widehat g[\ell]=\int_{0}^{1} g(x)\,e^{-j2\pi \ell x}\,dx . \quad \mathrm{\textit{Then}}  \quad \widehat g[\ell]=\frac{2(-1)^{\ell}\,\sin\!\big(\tfrac{\Delta}{2}\big)}{\Delta+2\pi \ell}
\]
Consequently, the Fourier series can be given as:
\begin{equation}\label{eq:FourierCoeff-g}
    g(x)\sim \sum_{\ell\in\mathbb{Z}}\widehat g[\ell]\;e^{j2\pi \ell x},
\end{equation}
where $g(x)$ converges point-wise to $\tfrac12\big(g(x^-)+g(x^+)\big)$.
\end{lemma}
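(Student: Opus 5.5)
The plan is to compute $\widehat g[\ell]$ directly, using periodicity to move the integration window to one where the rounding function is constant, and then to cite Lemma~\ref{lem:FourierConve} for the convergence claim.

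\textbf{Shifting the window.} By Lemma~\ref{lem:FourierConve}, $g$ is $1$-periodic, and $e^{-j2\pi\ell x}$ is also $1$-periodic for every $\ell\in\mathbb{Z}$. So the integral of $g(x)e^{-j2\pi\ell x}$ over $[0,1]$ equals its integral over any interval of length one. I will use $[-\tfrac12,\tfrac12)$. On this interval $\lfloor x\rceil=0$, apart from the single endpoint $x=-\tfrac12$, whose value depends on the tie-breaking convention. That endpoint has measure zero and does not affect the integral. Hence $g(x)=e^{-j\Delta x}$ almost everywhere on the window.

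\textbf{The integral.} With this simplification,
\[
\widehat g[\ell]=\int_{-1/2}^{1/2} e^{-j(\Delta+2\pi\ell)x}\,dx .
\]
I first check that the exponent coefficient $\Delta+2\pi\ell=2\pi\big(\tfrac1B+\ell\big)$ is nonzero. This holds because $B=2^b\geq 2$, so $\tfrac1B\notin\mathbb{Z}$, and therefore $-\tfrac1B$ is never an integer $\ell$. Integrating then gives
\[
\widehat g[\ell]=\frac{e^{-j(\Delta+2\pi\ell)/2}-e^{j(\Delta+2\pi\ell)/2}}{-j(\Delta+2\pi\ell)}
=\frac{2\sin\!\big(\tfrac{\Delta}{2}+\pi\ell\big)}{\Delta+2\pi\ell}.
\]
Since $\sin(\tfrac{\Delta}{2}+\pi\ell)=(-1)^\ell\sin(\tfrac{\Delta}{2})$, this is exactly the claimed formula.

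\textbf{Convergence.} The series representation \eqref{eq:FourierCoeff-g} and its pointwise convergence to $\tfrac12\big(g(x^-)+g(x^+)\big)$ follow from Lemma~\ref{lem:FourierConve}. That lemma gives bounded variation over a period, which is the hypothesis of the Dirichlet--Jordan theorem; symmetric partial sums are understood.

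\textbf{Possible pitfalls.} None of the steps is hard. The only places needing care are the measure-zero tie point at $x=-\tfrac12$ and the nonvanishing denominator. A further consistency check is that the $\ell=0$ coefficient reduces to $\mathrm{sinc}$-type form $2\sin(\Delta/2)/\Delta$, as it should.
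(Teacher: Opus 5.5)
Your proof is correct and follows essentially the paper's route: you evaluate the coefficient integral directly on a period where $g$ is a piecewise exponential, then invoke Lemma~\ref{lem:FourierConve} (Dirichlet--Jordan) for convergence. The only difference is that you use periodicity to move to the symmetric window $[-\tfrac12,\tfrac12)$, which gives $2\sin\!\big(\tfrac{\Delta}{2}+\pi\ell\big)/(\Delta+2\pi\ell)$ in one step; the paper integrates over $[0,1]$ split at $x=\tfrac12$ and merges the two pieces using $e^{-j(\Delta+2\pi\ell)}=e^{-j\Delta}$. Your justification that $\Delta+2\pi\ell\neq 0$ (since $B\geq 2$) and your note that symmetric partial sums are meant are both correct details the paper leaves implicit; the one-sided tails do diverge at the jump points $x=\tfrac12+k$.
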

\begin{proof}
See \textbf{Appendix C}.
\end{proof}
\subsection{Beam Pattern Transformation and Per-harmonic Peaks}
At this point, we transform the beam pattern in \eqref{eq:BeamPattern_QuantErr} by substituting the Fourier series of the error phasor function $g(x)$, as given in \eqref{eq:FourierCoeff-g}, and obtain:
\begin{align}
    \widehat S & = \sum\limits_{n_y=0}^{N_y-1}\sum\limits_{n_z=0}^{N_z-1} e^{j\left( 2\pi(\delta_z n_z + \delta_y n_y) + \varepsilon(\psi_{n_y,n_z}) \right)} \notag \\ 
    & = \sum\limits_{\ell \in \mathbb{Z}} \widehat g[\ell] \sum\limits_{n_y=0}^{N_y-1}\sum\limits_{n_z=0}^{N_z-1} e^{j 2\pi(\delta_z n_z + \delta_y n_y)} e^{j 2\pi \ell x}.
\end{align}
where $x = \frac{\psi_{n_y,n_z}}{\Delta}$ is taken from \eqref{eq:Normalized-RIS-Phase}. Therefore, $\widehat S$ becomes:
\begin{align}\label{eq:beam-patt-err}
   \widehat S = \sum\limits_{\ell \in \mathbb{Z}} S_\ell =  \sum\limits_{\ell \in \mathbb{Z}} \widehat g[\ell] e^{j2\pi\ell x_0} \!\!\!\sum\limits_{n_y=0}^{N_y-1}\sum\limits_{n_z=0}^{N_z-1} \!\!\!e^{j 2\pi\left(\beta_z(\ell) n_z + \beta_y(\ell) n_y\right)},
\end{align}
where $\beta_z(\ell) = \delta_z-\ell B \alpha_z \zeta_{I,D}$ and $\beta_y(\ell) = \delta_y-\ell B \alpha_y \xi_{I,D}$.

Using \eqref{eq:beam-patt-err}, we first obtain the per-harmonic peak locations.
To do so, we define the absolute value of the $\ell$-th harmonic in \eqref{eq:beam-patt-err} as given below:
\begin{align}\label{eq:T_ell}
    T_{\ell}=\abs{\widehat g[\ell]} \abs{\sum\limits_{n_y=0}^{N_y-1}\sum\limits_{n_z=0}^{N_z-1}  e^{j 2\pi\left(\beta_z(\ell) n_z + \beta_y(\ell) n_y\right)}}.
\end{align}
Since the total phases are now affine in $(n_y,n_z)$ for a given $\ell$, by applying \textbf{Lemma}~\ref{lem:PhaseDiff}, the term $T_{\ell}$ is maximized if and only if the following set of equations holds:
\begin{equation}\label{eq:z-peak-cond-Quantization}
    \beta_z(\ell) = \delta_z - \ell B \alpha_z \zeta_{I,D} = m_z, \quad m_z \in \mathbb{Z},
\end{equation}
\begin{equation}\label{eq:y-peak-cond-Quantization}
    \beta_y(\ell) = \delta_y - \ell B \alpha_y \xi_{I,D} = m_y, \quad m_y \in \mathbb{Z}.
\end{equation}

For a fixed $\ell$, solving \eqref{eq:z-peak-cond-Quantization} for $\varphi_O$ yields:
\begin{equation}\label{eq:elev-squint-Quant}
    \varphi_O^*(\ell, m_z) \!=\! \arcsin\left(\rho\left(\zeta_{I,D}\! +\! \tfrac{1}{\alpha_z}(m_z \!+\! \ell B \alpha_z \zeta_{I,D})\right)\!-\!\sin\varphi_I\!\right)\!,
\end{equation}
and, given $\varphi_O^*(\ell, m_z)$, solving \eqref{eq:y-peak-cond-Quantization} for $\theta_O$ results in the optimal azimuth angle given at the top of the \textcolor{black}{next} page in \eqref{eq:azim-squint-Quant}.
\begin{figure*}
\begin{align}\label{eq:azim-squint-Quant} &\theta_O^*(\ell,m_z,m_y) = \arcsin\left(\frac{\rho\left(\xi_{I,D}+\tfrac{1}{\alpha_y}(m_y+\ell B\alpha_y \xi_{I,D})\right)-\sin\theta_I\cos\varphi_I}{\sqrt{1-\left(\rho\left(\zeta_{I,D}+\tfrac{1}{\alpha_z}(m_z+\ell B \alpha_z \zeta_{I,D})\right)-\sin\varphi_I\right)^2}}\right). 
\end{align}
\hrule
\end{figure*}
From \eqref{eq:elev-squint-Quant} and \eqref{eq:azim-squint-Quant}, we obtain the elevation and azimuth angles $\varphi_O^*(\ell, m_z)$ and $\theta_O^*(\ell, m_z, m_y)$ corresponding to the $\ell$-th harmonic for each integer pair $(m_z, m_y)$, respectively. 
{\color{black}
These angle pairs are maximum-gain lobes of the isolated $\ell$-th harmonic response. Specifically, $(\ell=0,m_z=0,m_y=0)$ corresponds to beam squint, whereas $(\ell\neq0,m_z\in \mathbb{Z},m_y\in \mathbb{Z})$ corresponds to harmonic-induced beams and is also counted as beam-split in the quantization case.
}
\subsection{Existence of Per Harmonic Beam-Squint/Split}

To analyze the existence of per-harmonic peaks, we first restrict attention to a finite set of ``dominant'' harmonics whose relative strength exceeds a threshold $\eta$, and then carry out the feasibility analysis only for $\ell$ in this set. Define the relative strength of the $\ell$-th harmonic relative to $\ell=0$ as:

\begin{equation}\label{eq:Relative-magnit}
    \kappa_{\ell} = \frac{\abs{\widehat g[\ell]}}{\abs{\widehat g[0]}} = \frac{\Delta}{\abs{\Delta + 2\pi \ell}} = \frac{1}{\abs{1 + B \ell}}.\footnote{For 1-bit quantization, we have $\kappa_0 = \kappa_{-1} > \kappa_\ell$ for all $\ell \in \mathbb{Z} \backslash \{-1, 0\}$, whereas for $b \geq 2$-bit quantization, $\kappa_0 > \kappa_\ell$ for all $\ell \in \mathbb{Z} \backslash \{0\}$.}
\end{equation}
Let $\eta \in (0,1]$ be a relative strength threshold. Then, the set of harmonics whose relative strength compared to $\ell = 0$ is at least $\eta$ is given by:
\begin{equation}\label{eq:Worthy-harmonics}
    \mathcal{K}_\eta = \left\{ \ell \in [-K_-,K_+] \cap \mathbb{Z} \right\},
\end{equation}
where $K_+ = \left\lfloor (-1+\tfrac{1}{\eta})B^{-1} \right\rfloor$ is obtained by solving $\kappa_\ell \geq \eta$ for $\ell \geq 0$, and $K_- = \left\lfloor (1+\tfrac{1}{\eta})B^{-1} \right\rfloor$ is obtained by solving $\kappa_\ell \geq \eta$ for $\ell < 0$. Here, $\lfloor \cdot \rfloor$ denotes the floor function.

Now, for all selected harmonics $\ell \in \mathcal{K}_\eta$, we can apply the feasibility analysis described in {Section~III}. In particular, the lower and upper bounds of the intervals containing feasible integers in the elevation direction, analogous to \eqref{eq:Feas-interv-LowB-z} and \eqref{eq:Feas-interv-UpB-z}, for the $\ell$-th harmonic are given by:
\begin{equation}\label{eq:Feas-interv-LowB-z-Quant}
    L_z(\ell) = \frac{\alpha_z}{\rho} \left( \sin\varphi_I - 1  - \rho \zeta_{I,D}\left(\ell B + 1\right) \right), \ \forall \ell \in \mathcal{K}_\eta
\end{equation}
\begin{equation}\label{eq:Feas-interv-UpB-z-Quant}
    U_z(\ell) = \frac{\alpha_z}{\rho} \left( \sin\varphi_I + 1  - \rho \zeta_{I,D}\left(\ell B + 1\right) \right), \ \forall \ell \in \mathcal{K}_\eta
\end{equation}
such that $L_z(0) = L_z$ and $U_z(0) = U_z$. In the $\ell$-th interval, an integer $m_z$ exists if $\lceil L_z(\ell) \rceil \leq \lfloor U_z(\ell) \rfloor$. If this condition is satisfied, then every integer $m_z \in \mathcal{M}_z(\ell) = [\lceil L_z(\ell) \rceil, \lfloor U_z(\ell) \rfloor]$ yields a feasible elevation angle $\varphi_O^*(\ell, m_z)$.

Assuming that $\varphi_O^*(\ell, m_z)$ exists such that $A_O(\ell) = \cos\varphi_O^*(\ell, m_z) > 0$, the lower and upper bounds of the interval containing feasible azimuth integers, similar to \eqref{eq:Feas-interv-LowB-y} and \eqref{eq:Feas-interv-UpB-y}, for the $\ell$-th harmonic are obtained by:
\begin{equation}\label{eq:Feas-interv-LowB-y-Quant}
    L_y(\ell) \!=\! \frac{\alpha_y}{\rho} \!\left(\! \sin\theta_I\cos\varphi_I \!-\! \rho \xi_{I,D}(\ell B+1) - A_O(\ell) \!\right)\!, \forall \ell \in \mathcal{K}_\eta
\end{equation}
\begin{equation}\label{eq:Feas-interv-UpB-y-Quant}
    U_y(\ell) \!=\! \frac{\alpha_y}{\rho} \!\left(\! \sin\theta_I\cos\varphi_I \!-\! \rho \xi_{I,D}(\ell B+1) + A_O(\ell) \!\right)\!, \forall \ell \in \mathcal{K}_\eta
\end{equation} 
with $L_y(0) = L_y$ and $U_y(0) = U_y$. Similarly, in the $\ell$-th interval, an integer $m_y$ exists if $\lceil L_y(\ell) \rceil \leq \lfloor U_y(\ell) \rfloor$. If this condition holds, then every integer $m_y \in \mathcal{M}_y(\ell) = [\lceil L_y(\ell) \rceil, \lfloor U_y(\ell) \rfloor]$ yields a feasible azimuth angle $\theta_O^*(\ell, m_z, m_y)$. Further analysis on the feasibility of beam-squint and beam-split of the per-harmonic peak locations can then be carried out, analogous to the continuous-phase case.

\section{Per-harmonic beam-squint/Split Angle Correction Mechanism}

From \eqref{eq:beam-patt-err}, it is evident that the quantized beam pattern $\widehat S$ is expressed as the superposition of infinitely many harmonics. In particular, {it can be observed in Fig.~\ref{fig:TwoD_Corr_PerHar_Actu} that} the superposition of multiple harmonics and their respective split beams can displace the actual per-harmonic peak positions resulting from the aggregate beam pattern $S$. To address this, we propose a correction method that refines the predicted peak locations by accounting for the interference introduced by other harmonics. Then, we analyze how the number of RIS elements and the frequency ratio can affect the accuracy of the quantized peak locations.

Considering the fact that multiplying a constant and mapping of $a \rightarrow a^2$ do not change the $\arg\max$ of the function in \eqref{eq:SQ-Problem}, we define the aggregate beam power pattern as follows:
\begin{equation}\label{eq:Total-Pow_pattermn}
    J(\varphi_O,\theta_O) = \abssSq{ \widehat S(\varphi_O,\theta_O)},
\end{equation}
and for a particular harmonic $\ell \in \mathbb{Z}$, define  $J_\ell(\varphi_O,\theta_O) = \abssSq{S_\ell(\varphi_O,\theta_O)}$. Moreover, for harmonic $\ell \in \mathcal{K}_\eta$ and integer pair $(m_z,m_y) \in \mathcal{M}_z(\ell) \times \mathcal{M}_y(\ell)$, we define
\begin{equation}    \boldsymbol{p}=\left[\begin{array}{c}\varphi_O^*\left(\ell, m_z\right) \\ \theta_O^*\left(\ell, m_z, m_y\right)\end{array}\right],
\end{equation} which is the main lobe center for this harmonic and integer pair. 
{\color{black}
At point $\boldsymbol{p}$, the \textit{isolated} harmonic power has a local maximum, i.e., $\nabla J_\ell(\boldsymbol{p})=\boldsymbol 0$. Moreover, for $N_y,N_z\geq2$, $\alpha_y,\alpha_z>0$, and $\varphi_O^*,\theta_O^* \in (-\tfrac{\pi}{2},\tfrac{\pi}{2})$, its Hessian satisfies $\boldsymbol H_{J_\ell}(\boldsymbol p)\prec0$, as shown in \textbf{Appendix~D}.
}

Now, we aim to estimate the actual maximum of  $J$ in a neighborhood of $\boldsymbol{p}$, which is further closer to the actual location of this specific lobe once all harmonics are superposed.
\subsection{Decomposition of Harmonic of Interest and Perturbation}
First, we decompose the {aggregate beam power pattern} into the main harmonic of interest at point $\boldsymbol{p}$, denoted by $J_\ell(\boldsymbol{p})$, and the perturbation, which is the contribution of all other harmonics, denoted by $R_\ell(\boldsymbol{p})$, i.e.,
\begin{align}\label{eq:Total-Pow-Patt-Extend}
    J(\boldsymbol{p}) &= J_\ell(\boldsymbol{p}) + R_\ell(\boldsymbol{p}) 
=
    \abssSq{\sum\limits_{\ell} S_{\ell}(\boldsymbol{p})}  \notag \\ 
    & = J_\ell(\boldsymbol{p})  + \underbrace{\sum\limits_{i\neq \ell} \abssSq{S_{i}(\boldsymbol{p})} + 2\sum\limits_{k < r} \Re\left\{S_{k}(\boldsymbol{p}) S^*_{r}(\boldsymbol{p})\right\} \hspace{-1mm}}_{R_\ell(\boldsymbol{p})}.
\end{align}
We assume that the perturbation $R_\ell(\boldsymbol{p})$ is locally smaller than $J_\ell(\boldsymbol{p})$, so we can approximate the gradient of $J$ using a first-order approximation as given in \textbf{Lemma~\ref{lem:1stApprx-VectorFunc}}.

\begin{lemma}[First-order approximation of a vector-valued function] \label{lem:1stApprx-VectorFunc}
Let $\boldsymbol{F}: \mathbb{R}^n \rightarrow \mathbb{R}^m$ be a vector-valued function that is differentiable at point $\boldsymbol{a} \in \mathbb{R}^n$, and let $\boldsymbol{F}^\prime(\boldsymbol{a}) \in \mathbb{R}^{m\times n}$ denote its Jacobian matrix at $\boldsymbol{a}$. Then, for any vector $\boldsymbol{h} \in \mathbb{R}^n$, we have:
$
\boldsymbol{F}(\boldsymbol{a} + \boldsymbol{h}) = \boldsymbol{F}(\boldsymbol{a}) + \boldsymbol{F}^\prime(\boldsymbol{a})\, \boldsymbol{h} + \boldsymbol{r}(\boldsymbol{h}),
$
where the remainder $\boldsymbol{r}(\boldsymbol{h}) \in \mathbb{R}^m$ satisfies
$\lim_{\boldsymbol{h}\to \boldsymbol{0}} \tfrac{\norm{\boldsymbol{r}(\boldsymbol{h})}}{\norm{\boldsymbol{h}}} = 0$
Hence, $\boldsymbol{F}$ admits the first-order approximation:
$\boldsymbol{F}(\boldsymbol{a} + \boldsymbol{h})\approx \boldsymbol{F}(\boldsymbol{a}) + \boldsymbol{F}^\prime(\boldsymbol{a}) \boldsymbol{h}$
with small error for small $\boldsymbol{h}$.
\end{lemma}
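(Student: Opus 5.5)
The plan is to prove Lemma~\ref{lem:1stApprx-VectorFunc} directly from the definition of (Fréchet) differentiability, working componentwise to move between the scalar and vector settings. By definition, $\boldsymbol{F}$ is differentiable at $\boldsymbol{a}$ exactly when some linear map $\boldsymbol{L}\in\mathbb{R}^{m\times n}$ satisfies $\lim_{\boldsymbol{h}\to\boldsymbol{0}}\norm{\boldsymbol{F}(\boldsymbol{a}+\boldsymbol{h})-\boldsymbol{F}(\boldsymbol{a})-\boldsymbol{L}\boldsymbol{h}}/\norm{\boldsymbol{h}}=0$. So the steps are:
\begin{enumerate}
\item Define the remainder as $\boldsymbol{r}(\boldsymbol{h})\triangleq\boldsymbol{F}(\boldsymbol{a}+\boldsymbol{h})-\boldsymbol{F}(\boldsymbol{a})-\boldsymbol{F}^\prime(\boldsymbol{a})\boldsymbol{h}$. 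With this choice, the stated identity holds trivially for every $\boldsymbol{h}\in\mathbb{R}^n$.
\item Show that $\boldsymbol{r}(\boldsymbol{h})=o(\norm{\boldsymbol{h}})$.
\end{enumerate}

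The only substantive step is identifying $\boldsymbol{L}$ with the Jacobian $\boldsymbol{F}^\prime(\boldsymbol{a})$, whose entries are the partial derivatives $\partial F_i/\partial a_k$.
\begin{enumerate}
\item Take $\boldsymbol{h}=t\boldsymbol{e}_k$ with $t\to0$. Component $i$ of the limit condition gives $\big(F_i(\boldsymbol{a}+t\boldsymbol{e}_k)-F_i(\boldsymbol{a})\big)/t\to L_{ik}$.
\item Hence the partial derivative exists and equals $L_{ik}$, which forces $\boldsymbol{L}=\boldsymbol{F}^\prime(\boldsymbol{a})$.
\item The same argument shows that $\boldsymbol{L}$ is unique, so the remainder in step 1 is well defined.
\end{enumerate}

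The limit on $\boldsymbol{r}$ then follows immediately from differentiability. The equivalence between componentwise and vector convergence comes from the norm bounds $\max_i\abs{r_i}\le\norm{\boldsymbol{r}}\le\sqrt{m}\max_i\abs{r_i}$: one can prove $r_i(\boldsymbol{h})/\norm{\boldsymbol{h}}\to0$ for each scalar component $F_i$ and sum the finitely many components. The approximation statement is then the informal reading of this limit. For every $\epsilon>0$ there is $\delta>0$ such that $\norm{\boldsymbol{h}}<\delta$ implies $\norm{\boldsymbol{r}(\boldsymbol{h})}\le\epsilon\norm{\boldsymbol{h}}$, so the error is negligible relative to $\norm{\boldsymbol{h}}$.

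The main obstacle is interpretive rather than technical. If differentiability is understood only as existence of partial derivatives, the lemma is false in general. I would therefore state explicitly that differentiability means Fréchet differentiability. For the intended application, $\boldsymbol{F}=\nabla J$ (or $\nabla J_\ell+\nabla R_\ell$) is a finite trigonometric-polynomial expression in $(\varphi_O,\theta_O)$ and hence $C^1$. The standard sufficient condition, continuous partial derivatives implying Fréchet differentiability, settles this. It is proved via the mean value theorem applied along coordinate segments, one coordinate at a time; I would cite this rather than reprove it.
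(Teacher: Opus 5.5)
Your proposal is correct and follows the same route as the paper, which appeals directly to the definition of differentiability and cites standard textbook theorems (Edwards) for the rest. You work out explicitly the one nontrivial ingredient, that the linear map in the definition coincides with the Jacobian, by taking limits along $t\boldsymbol{e}_k$; your remark that differentiability must be read in the Fr\'echet sense is a sensible clarification the paper leaves implicit.
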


\begin{proof}
This follows directly from the definition of differentiability and Theorems 2.4 and 4.1 of \cite{edwards2012advanced}.
\end{proof}

Now, in order to estimate the actual peak location, denoted by $\bar{\boldsymbol{p}}$, of the main lobe for harmonic $\ell$ near $\boldsymbol{p}$, we define the shift vector as
$\Delta \boldsymbol{p} = \bar{\boldsymbol{p}} - \boldsymbol{p} = 
\begin{bmatrix}
\Delta \varphi \\
\Delta \theta
\end{bmatrix}.$
Note that $\bar{\boldsymbol{p}}$ is a local maximum of the total beam power pattern $J$ and therefore satisfies $\nabla J(\bar{\boldsymbol{p}}) = 0$.
Assuming that the actual peak $\bar{\boldsymbol{p}}$ lies close to the nominal peak $\boldsymbol{p}$ of harmonic $\ell$, we linearize the vector function $\nabla J(\bar{\boldsymbol{p}})$ around the point $\boldsymbol{p}$. 

By applying \textbf{Lemma}~\ref{lem:1stApprx-VectorFunc} to $\nabla J(\bar{\boldsymbol{p}})$, we have:
\begin{equation}\label{eq:1stApprx-on-TotalPatt-Grad}
    \nabla J(\boldsymbol{p} + \Delta \boldsymbol{p}) \approx \nabla J(\boldsymbol{p}) + \boldsymbol{H}_J(\boldsymbol{p}) \Delta\boldsymbol{p},
\end{equation}
where $\boldsymbol{H}_J$ is the Hessian of $J$, i.e., Jacobian matrix of $\nabla J$. Moreover, $\nabla J(\boldsymbol{p}) = \nabla R_\ell(\boldsymbol{p)}$ since $\nabla J_\ell(\boldsymbol{p}) = 0$. Solving \eqref{eq:1stApprx-on-TotalPatt-Grad} by using $\boldsymbol{H}_J(\boldsymbol{p})$ directly to find the angle shift $\Delta \boldsymbol{p}$ corresponds to a Newton step on the stationary condition $\nabla J(\bar{\boldsymbol{p}}) = 0$. However, since $J$ is highly nonconvex and contains many extrema, this step can move the estimate toward side-lobes or even null directions. 

\subsection{Local Surrogate Function-based Optimization}
Instead of using Newton step by using the Hessian of $J$, we maximize a local surrogate that preserves the dominant harmonic's negative-definite curvature while treating the residual $R_\ell$ as a linear perturbation. Since $\boldsymbol p$ is the isolated maximizer of $J_\ell$, we have $\nabla J_\ell(\boldsymbol p)=\boldsymbol 0$. We approximate 
\begin{equation}
J(\boldsymbol p+\Delta\boldsymbol p) \approx \tilde{J}(\Delta \boldsymbol{p}),
\end{equation}
where $J(\boldsymbol p+\Delta\boldsymbol p)=J_\ell(\boldsymbol p+\Delta\boldsymbol p)+R_\ell(\boldsymbol p+\Delta\boldsymbol p)$ and 
\begin{equation}\label{eq:surrgate_Delta}
    \tilde{J}(\Delta \boldsymbol{p}) \triangleq J_\ell(\boldsymbol{p})
    + \tfrac{1}{2}\Delta \boldsymbol{p}^T \boldsymbol{H}_{J_\ell}(\boldsymbol{p}) \Delta \boldsymbol{p}
    + R_\ell(\boldsymbol{p}) + \nabla R^T_\ell(\boldsymbol{p})\Delta \boldsymbol{p},
\end{equation}
which is second order in $J_\ell$ and first order in $R_\ell$ around $\boldsymbol{p}$. Since $\boldsymbol{H}_{J_\ell}(\boldsymbol{p}) \prec 0$, $\tilde{J}(\Delta \boldsymbol{p})$ is strictly concave in $\Delta\boldsymbol p$, and setting $\nabla_{\Delta \boldsymbol{p}} \tilde{J} = \boldsymbol 0$ yields:
\begin{equation}\label{eq:shift-Estimate}
    \Delta \boldsymbol{p} = - \boldsymbol{H}^{-1}_{J_\ell}(\boldsymbol{p}) \nabla R_\ell(\boldsymbol{p}).
\end{equation}
Thus, \eqref{eq:shift-Estimate} is the unique maximizer of this local surrogate, using the dominant harmonic's curvature rather than the full Hessian of $J$.
The expressions of the gradient and the Hessian used in \eqref{eq:shift-Estimate} are presented in \textbf{Appendix E}.

{\color{black}
The accuracy of the per-harmonic peak locations and the shift estimate in
\eqref{eq:shift-Estimate} depends on whether the residual term
$R_\ell$ remains a local perturbation relative to the isolated harmonic
lobe $J_\ell$, and whether the local curvature
$\boldsymbol H_{J_\ell}(\boldsymbol p)$ is sufficiently strong. From
\eqref{eq:shift-Estimate}, a large residual gradient
$\nabla R_\ell(\boldsymbol p)$ or a weak local curvature can produce a
large shift $\Delta\boldsymbol p$. Such a large shift may violate the
local assumption used in the first-order approximation of
$R_\ell(\boldsymbol p+\Delta\boldsymbol p)$ around $\boldsymbol p$.
To quantify this condition, we define the following validity metric:
\begin{equation}
\chi(\boldsymbol p)
=
\frac{\abssOne{
\nabla R_\ell^T(\boldsymbol p)\Delta \boldsymbol p}}{
J_\ell(\boldsymbol p)
},
\label{eq:chi-validity-1}
\end{equation}
where the numerator is the first-order change of the residual term over the predicted correction estimate, normalized by the harmonic peak power $J_\ell(\boldsymbol p).$ Hence,
$\chi(\boldsymbol p)<1$ indicates that the residual-induced local
variation over the predicted shift is smaller than the isolated harmonic
peak power, with smaller values corresponding to a more reliable local
approximation. Therefore,
$\chi$ provides a quantitative local-validity criterion for the
perturbation correction in \eqref{eq:shift-Estimate}. Moreover, using \eqref{eq:shift-Estimate}, we have
$\boldsymbol H_{J_\ell}(\boldsymbol p)\Delta\boldsymbol p
+\nabla R_\ell(\boldsymbol p)=\boldsymbol 0$, and the metric in
\eqref{eq:chi-validity-1} can be equivalently written as $\chi(\boldsymbol p)
=
\frac{
\Delta\boldsymbol p^T
[-\boldsymbol H_{J_\ell}(\boldsymbol p)]
\Delta\boldsymbol p
}{
J_\ell(\boldsymbol p)
},$ where the numerator is nonnegative since $\boldsymbol{H}_{J_\ell}(\boldsymbol{p}) \prec 0$, as shown in
{Appendix~D}. Additionally, from the local
second-order expansion of $J_\ell$ around $\boldsymbol p$, we have $
J_\ell(\boldsymbol p+\Delta\boldsymbol p)
\approx
J_\ell(\boldsymbol p) - \frac{1}{2}
\Delta\boldsymbol p^T
[-\boldsymbol H_{J_\ell}(\boldsymbol p)]
\Delta\boldsymbol p.$
Therefore, $\chi(\boldsymbol p)/2$ approximates the fractional drop of the
isolated harmonic lobe caused by the correction step. This shows that
$\chi(\boldsymbol p)$ measures whether the predicted correction remains
local in the lobe power sense, rather than only in terms of angular displacement.

Based on the negative-curvature result in {Appendix~D} and
the derivative expressions in {Appendix~E}, the dominant
terms of $\boldsymbol H_{J_\ell}(\boldsymbol p)$ scale as
$\abssSq{\widehat g[\ell]}\rho^{-2}N_y^2N_z^2
\left(\alpha_z^2(N_z^2-1)+\alpha_y^2(N_y^2-1)\right)$,
up to bounded angular factors and cross-curvature terms. Hence, increasing
$N_y$ and $N_z$ strengthens the local curvature of the isolated harmonic
lobe and reduces its sensitivity to the impact of residual harmonics, whereas
increasing $\rho$ weakens the curvature through the $1/\rho^2$ factor.
Since \eqref{eq:shift-Estimate} depends on
$\boldsymbol H_{J_\ell}^{-1}(\boldsymbol p)$, weaker curvature can produce
a larger correction step. Accordingly, in the numerical results, we use $\chi(\boldsymbol p)$ as the validity metric and also report
$\|\Delta\boldsymbol p\|_2$ only to show the angular size of the
correction.
}

\begin{figure*}[t]
\centering

\begin{minipage}[t]{0.44\textwidth}
    \centering
    \includegraphics[width=0.96\linewidth]{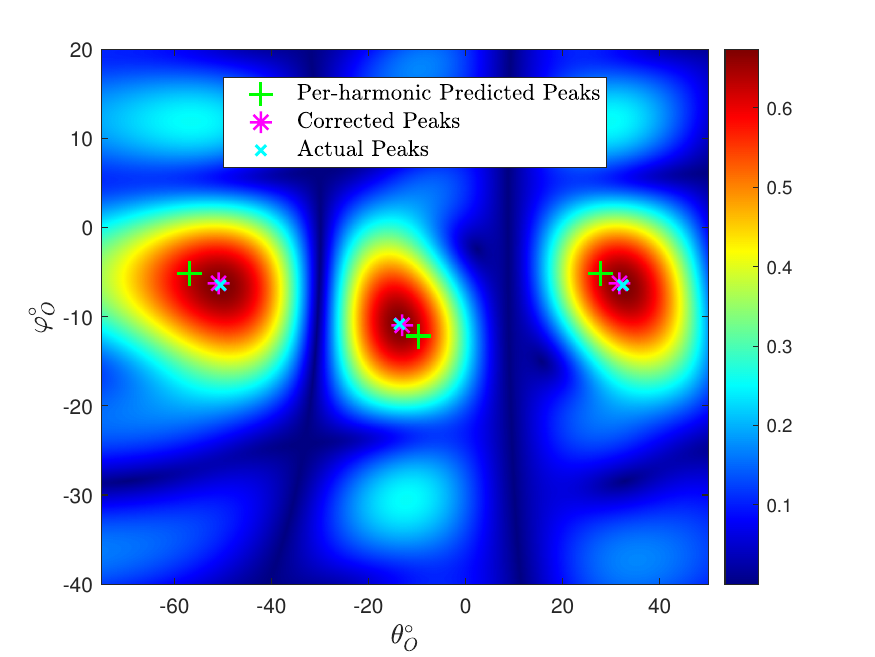}
    \caption{Quantized RIS beampattern under frequency mismatch for $\rho=0.65$, $\alpha_y=\alpha_z=0.5$, $N_y=4$, $N_z=6$, and $b=1$, with incidence $(\varphi_I,\theta_I)=(-30^\circ,-10^\circ)$ and design direction $(\varphi_D,\theta_D)=(-24^\circ,44^\circ)$.}
    \label{fig:TwoD_Corr_PerHar_Actu}
\end{minipage}\hfill
\begin{minipage}[t]{0.44\textwidth}
    \centering
    \includegraphics[width=0.96\linewidth]{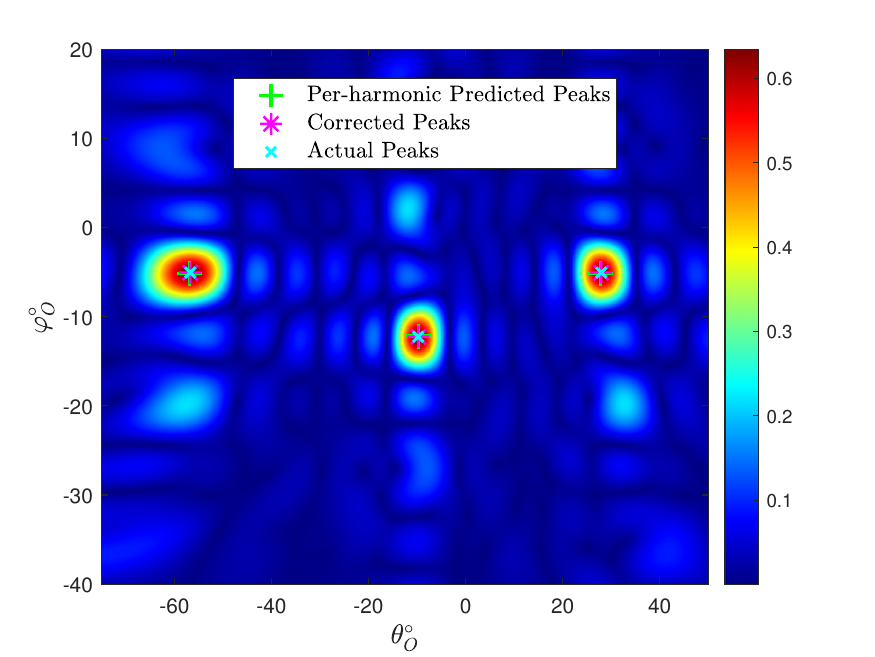}
    \caption{Quantized RIS beampattern under frequency mismatch for $\rho=0.65$, $\alpha_y=\alpha_z=0.5$, $N_y=12$, $N_z=16$, and $b=1$, with incidence $(\varphi_I,\theta_I)=(-30^\circ,-10^\circ)$ and design direction $(\varphi_D,\theta_D)=(-24^\circ,44^\circ)$.}
    \label{fig:TwoD_Corr_PerHar_Actu_LargeN}
\end{minipage}
\end{figure*}

\begin{figure}
    \centering
    \includegraphics[width=0.95\linewidth]{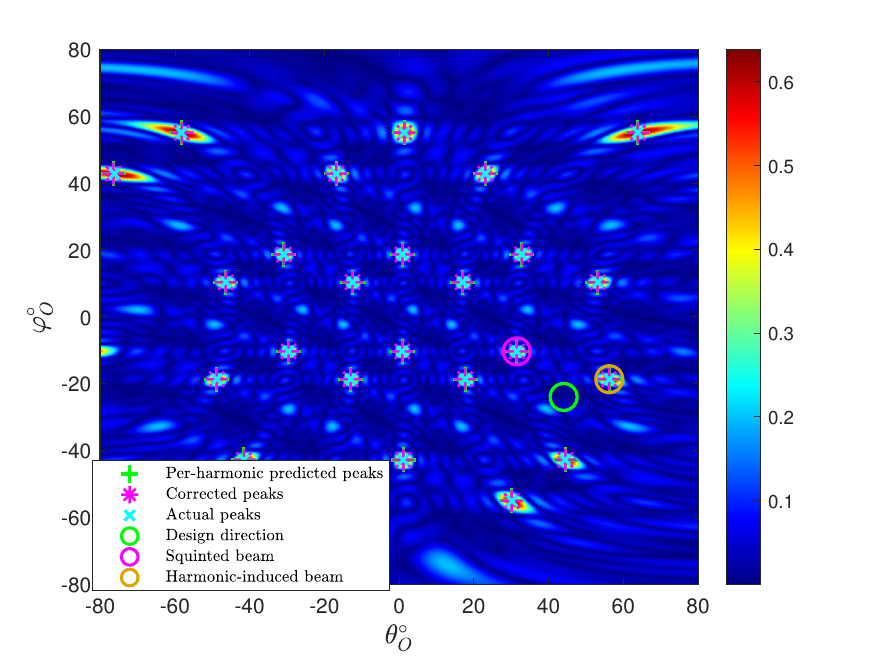}
    \caption{\textcolor{black}{Quantized RIS beampattern under frequency mismatch for $\rho=0.75$, $\alpha_y=\alpha_z=1.5$, $N_y=12$, $N_z=12$, and $b=1$, with incidence $(\varphi_I,\theta_I)=(-30^\circ,-10^\circ)$ and design direction $(\varphi_D,\theta_D)=(-24^\circ,44^\circ)$. The green circle marks the desired direction, the magenta circle marks the squinted beam associated with $(\ell,m_z,m_y)=(0,0,0)$, and the dark-yellow circle marks the quantization-induced harmonic beam associated with
$(\ell,m_z,m_y)=(-1,-3,2)$, which is considered as a beam-split lobe.}}
    \label{fig:TwoD_Corr_PerHar_Actu_LargeN_Many_BSP_Quant}
\end{figure}

{
Moreover, by comparing the beampatterns in Fig.~\ref{fig:TwoD_Corr_PerHar_Actu} and Fig.~\ref{fig:TwoD_Corr_PerHar_Actu_LargeN}, we note that increasing the number of RIS elements improves the accuracy of the per-harmonic peak locations, as well as that of the corrected peaks, by increasing the curvature of $J$ at the per-harmonic peak locations. Finally, Fig.~\ref{fig:TwoD_Corr_PerHar_Actu_LargeN_Many_BSP_Quant} illustrates the beampattern of a $1$-bit RIS with many beam-splits, all of which are detected by the proposed framework. Similarly to the continuous case, we utilize feasibility analysis to limit the search over the integer pairs $(m_z,m_y)$ for each harmonic $\ell \in \{-1,0\}$.
}

{
\subsection{Computational Complexity Analysis}
A direct beamforming gain peak search evaluates the beampattern (or $J=|S|^2$) over a 2D grid
$\mathcal G_\varphi\times\mathcal G_\theta$ with sizes $G_\varphi$ and $G_\theta$.
Since each evaluation of $S(\varphi,\theta)$ requires summing over all $N=N_yN_z$ RIS elements,
the exhaustive-scan complexity scales as $
\mathsf{Cost}_{\mathrm{sweep}}=O(G_\varphi G_\theta N),
$
which can be prohibitive for fine angular grids (large $G_\varphi$ and $G_\theta$) and large RIS sizes. In contrast, the complexity of our proposed framework is based on the following two factors.

\subsubsection{Per-harmonic peak complexity}
For each $\ell\in\mathcal K_\eta$, feasible peak indices satisfy
$m_z\in\mathcal M_z(\ell)=\mathbb Z\cap[L_z(\ell),U_z(\ell)]$ and
$m_y\in\mathcal M_y(\ell)=\mathbb Z\cap[L_y(\ell),U_y(\ell)]$.
Using $|\mathbb Z\cap[L,U]|\le (U-L)+2$ and the interval lengths
$U_z(\ell)-L_z(\ell)=\frac{2\alpha_z}{\rho}$ and
$U_y(\ell)-L_y(\ell)=\frac{2\alpha_y}{\rho}A_O(\ell)\le \frac{2\alpha_y}{\rho}$,
we obtain $|\mathcal M_z(\ell)|=O(\alpha_z/\rho+1)$ and $|\mathcal M_y(\ell)|=O(\alpha_y/\rho+1)$ $\forall  \ell \in \mathcal{K}_\eta$.
Hence, the total number of per-harmonic candidate peaks satisfies:
\begin{equation}
\sum_{\ell\in\mathcal K_\eta} |\mathcal M_z(\ell)|\,|\mathcal M_y(\ell)|
=
O\!\left(|\mathcal K_\eta|\left(\tfrac{\alpha_z}{\rho}+1\right)\left(\tfrac{\alpha_y}{\rho}+1\right)\right).
\end{equation}
Each candidate angle pair is obtained in constant time from \eqref{eq:elev-squint-Quant}--\eqref{eq:azim-squint-Quant}.
Therefore, the total complexity is obtained by: 
\begin{equation}
\mathsf{Cost}_{\mathrm{har}}
=
O\!\left(|\mathcal K_\eta|\left(\tfrac{\alpha_z}{\rho}+1\right)\left(\tfrac{\alpha_y}{\rho}+1\right)\right).
\end{equation}

\subsubsection{Perturbation-based correction complexity}
{\color{black}
The correction in \eqref{eq:shift-Estimate} is applied to each nominal per-harmonic candidate. The per-candidate cost depends on how the residual-gradient term is evaluated. If a truncated harmonic set $\mathcal K_c$ is used, each retained harmonic is separable along the $z$- and $y$-dimensions, so the field and first-order
derivatives cost $O(|\mathcal K_c|(N_y+N_z))$ per candidate. Hence,
\begin{equation}
\mathsf{Cost}_{\mathrm{corr}}^{\mathrm{trunc}}
=
O\!\left(
|\mathcal K_\eta|
\left(\frac{\alpha_z}{\rho}+1\right)
\left(\frac{\alpha_y}{\rho}+1\right)
|\mathcal K_c|(N_y+N_z)
\right).
\end{equation}
If the residual gradient is instead computed from the exact quantized field, the total field and its first-order derivatives require a sum over all $N_yN_z$ RIS elements at each candidate. The isolated-harmonic Hessian remains separable and is dominated by this exact gradient evaluation. Thus,
\begin{equation}
\mathsf{Cost}_{\mathrm{corr}}^{\mathrm{exact}}
=
O\!\left(
|\mathcal K_\eta|
\left(\frac{\alpha_z}{\rho}+1\right)
\left(\frac{\alpha_y}{\rho}+1\right)
N_yN_z
\right).
\end{equation}
}

\section{Use-Case Study: Beam-Split Aware Scheduling in Multi-RIS, Multi-User, Multi-band Network}
\label{subsec:usecase_twoRIS_twoUE_multiband}

In this section, to demonstrate the effectiveness of the obtained closed-form per-harmonic beam-squint/split angles in \eqref{eq:elev-squint-Quant}--\eqref{eq:azim-squint-Quant}, we consider a two-BS, two-RIS, two-UE downlink transmission scenario in an MBN, as illustrated in Fig.~\ref{fig:SystemModel}. User Equipment (UE)~1 is served on frequency band $f_1$ and UE~2 on frequency band $f_2$, with $
\rho_{12}= \frac{f_1}{f_2}$, $\rho_{21}=\frac{f_2}{f_1}$, and $\lambda_k=\frac{c}{f_k},\ k\in\{1,2\}$.
RIS~1 is configured at $f_1$ and RIS~2 at $f_2$. UE 1 is considered as the \textit{home} UE of RIS 1 and the \textit{cross} UE of RIS 2, and vice versa. 
\textcolor{black}{
Specifically, we assume that the direct BS--UE paths are blocked, which allows us to isolate the RIS-assisted frequency-mismatch effects in the considered use case. However, the presence of direct BS--UE links does not change the proposed characterization of the RIS-reflected maximum-gain beams.}  
\textcolor{black}{Moreover, we evaluate the achievable rate under a carrier-dependent LoS-dominant multipath channel. The proposed analytical framework is used to generate candidate RIS phase profiles only from the LoS geometry, while the final achievable rate is computed using the complete multipath cascaded channel. This separation is important: the closed-form split/squint peak families provide a low-complexity candidate-generation mechanism, whereas the rate evaluation accounts for the actual multipath channel realization.}
{\color{black}
Let $\boldsymbol h^{\mathrm{BR}}_{k,r}(f_k)\in\mathbb C^{N}$ denote the BS~$k$ to RIS~$r$ channel vector at carrier $f_k$, and let $\boldsymbol h^{\mathrm{RU}}_{r,i}(f_k)\in\mathbb C^{N}$ denote the RIS~$r$ to UE~$i$ channel vector at the same carrier. For a given RIS phase matrix $\boldsymbol{\Psi}_r$, the cascaded gain through RIS~$r$ is $
h_{k,r,i}(\boldsymbol{\Psi}_r;f_k)
=
\left(\boldsymbol h^{\mathrm{RU}}_{r,i}(f_k)\right)^T
\boldsymbol{\Psi}_r
\boldsymbol h^{\mathrm{BR}}_{k,r}(f_k).$
}
{\color{black}
Each channel is modeled as \cite{su2025two}:
\begin{equation}
\mathbf h_{a,b}(f)
=
\sqrt{\beta_{a,b}(f)}
\sum_{l=0}^{L_{a,b}}
\gamma_{a,b,l}
e^{-j2\pi f\tau_{a,b,l}}
\mathbf a(f,\varphi_{a,b,l},\theta_{a,b,l}),
\end{equation}
where $l=0$ denotes the LoS path and
$l=1,\ldots,L_{a,b}$ denote the NLoS paths. $\beta_{a,b}(f)$ is the large-scale path gain, $\gamma_{a,b,l}$ is complex gain, $\tau_{a,b,l}$ is the path delay, and $(\varphi_{a,b,l},\theta_{a,b,l})$ is the path direction at the RIS.
}
{\color{black}
Then, considering the orthogonality of the different frequency bands, the sum-rate is computed as follows:
\begin{align}\label{eq:Sum-Rate}
R
&=
\log_2\!\left(
1+
\frac{
P_1
\left|
h_{1,1,1}(\boldsymbol{\Psi}_1;f_1)
+
h_{1,2,1}(\boldsymbol{\Psi}_2;f_1)
\right|^2
}{\sigma_1^2}
\right)
\nonumber\\
&\quad+
\log_2\!\left(
1+
\frac{
P_2
\left|
h_{2,2,2}(\boldsymbol{\Psi}_2;f_2)
+
h_{2,1,2}(\boldsymbol{\Psi}_1;f_2)
\right|^2
}{\sigma_2^2}
\right),
\end{align}
}where $P_1$ and $P_2$ denote the transmit powers at BS 1 and BS 2, respectively, and $\sigma^2_1$ and $\sigma^2_2$ are the additive noise powers at UE 1 and UE 2, respectively. \textcolor{black}{Our objective is to maximize the sum-rate while allowing each RIS to provide cross-band assistance when a mismatched-band reflected beam is aligned with the corresponding cross-band UE.} In particular, the sum-rate depends on $\abssSq{h_1(\boldsymbol{\Psi}_1,\boldsymbol{\Psi}_2;f_1)}=\abssSq{h_{1,1,1}(\boldsymbol{\Psi}_1;f_1)+h_{1,2,1}(\boldsymbol{\Psi}_2;f_1)}$ for UE 1 and $\abssSq{h_2(\boldsymbol{\Psi}_1,\boldsymbol{\Psi}_2;f_2)}=\abssSq{h_{2,2,2}(\boldsymbol{\Psi}_2;f_2)+h_{2,1,2}(\boldsymbol{\Psi}_1;f_2)}$ for UE 2. Therefore, improving the \emph{cross} terms $h_{1,2,1}(\boldsymbol{\Psi}_2;f_1)$ and $h_{2,1,2}(\boldsymbol{\Psi}_1;f_2)$ (mismatched-band contribution) can increase the received power through both RISs.
}

{While \eqref{eq:RIS-phases-rho-1} can be instantiated for any continuous steering direction $(\varphi_D,\theta_D)$, practical beam management relies on selecting a profile from a finite RIS codebook to enable codeword-based beam training with limited overhead \cite{lv2024ris}.}
We therefore discretize the direction-cosine domain into a grid of resolution $Q_{\min}$, which controls the tradeoff between angular snapping error and the number of codewords evaluated. This direction discretization is distinct from the $b$-bit element-level phase quantization in \eqref{eq:quantizer}. 

In what follows, we describe the design of a beam-split-aware RIS phase-shift codebook with a given resolution $Q_{\min}$ and the corresponding codeword selection. We then present the associated simulation results in the next section. {Moreover, for ease of exposition, we adopt generic notation for the codebook design and codeword selection method for a given RIS.
}

\subsection{Quantized codebooks at $\rho=1$ (home and cross incidence)}
For each RIS, we construct a finite set of $b$-bit phase profiles $\{\boldsymbol{\Psi}(q)\}$ at $\rho=1$ by instantiating the continuous-phase design in \eqref{eq:RIS-phases-rho-1} for a grid of \emph{design} directions $(\varphi_D,\theta_D)$ and then applying the elementwise quantizer in \eqref{eq:quantizer}. To do so, we first parameterize directions using the direction-cosines implied by \eqref{eq:Arr-Gain},
$s_{z,D}=\sin\varphi_D$, $ s_{y,D}=\cos\varphi_D\sin\theta_D$,
and sample $(s_{z,D},s_{y,D})$ on a $Q_{\min}\times Q_{\min}$ uniform lattice, where $Q_{\min} \in \mathbb{Z}_+$ determines the resolution of sampling. 

Hence, for $|s_{z,D}|\le 1$ and $|s_{y,D}|\le\sqrt{1-s_{z,D}^2}$, we have
$\varphi_D=\arcsin(s_{z,D})$ and 
$\theta_D=\arcsin\!\left(\frac{s_{y,D}}{\cos\varphi_D}\right).$
For each $(\varphi_D,\theta_D)$, we plug $(\varphi_I,\theta_I)$ and $(\varphi_D,\theta_D)$ into the phase-alignment rule \eqref{eq:RIS-phases-rho-1}, quantize using \eqref{eq:quantizer}, and form the unit-modulus profile $\boldsymbol{\Psi}(q)=\exp(j\boldsymbol{\psi}_q)$, which corresponds to the $q$-th codeword.

We then precompute \emph{two} codebooks per RIS: a \emph{home} codebook built using the home BS$\to$RIS incidence angles, and a \emph{cross} codebook built using the cross BS$\to$RIS incidence angles. This separation is necessary because the incidence angles appear inside the phase-slope terms in \eqref{eq:RIS-phases-rho-1}, so changing the BS incidence changes the entire quantized phase pattern associated with a given $(\varphi_D,\theta_D)$. As a result, the home and cross-incidence cases generally induce different discrete mappings $(\varphi_D,\theta_D)\mapsto \boldsymbol{\Psi}(q)$, and both are required by the two-RIS two-BS geometry.

\subsection{Beam-split-aware Codeword Selection}
For each RIS, the cross-assistance goal is to find a \emph{small} set of quantized phase profiles (designed at $\rho=1$ using the cross-incidence codebook) whose \emph{mismatched} dominant lobe aligns with the cross-user direction at the other band. We avoid exhaustive $(\varphi,\theta)$ sweeps at $\rho\neq 1$ by using the per-harmonic peak-family structure in \eqref{eq:elev-squint-Quant}--\eqref{eq:azim-squint-Quant}.
Consider one RIS with known incidence angles $(\varphi_I,\theta_I)$ for the cross link (other BS$\to$RIS). Let $(\varphi_O,\theta_O)$ denote the \emph{desired} cross-user direction (RIS$\to$cross UE). For each tuple $(\ell,m_z,m_y)$ with $\ell \in \mathcal{K}_\eta$, $m_z \in \mathcal{M}_z(\ell)$ and $m_y \in \mathcal{M}_y(\ell)$, we compute a \emph{design} direction $(\varphi_D,\theta_D)$ at $\rho=1$ such that the $\ell$-th harmonic peak at mismatch $\rho$ coincides with $(\varphi_O,\theta_O)$.
Then, assuming $\rho = \rho_{12}$ for RIS 1 evaluated at $f_2$ (or $\rho = \rho_{21}$ for RIS 2      evaluated at $f_1$), using the peak-family equations used to obtain \eqref{eq:elev-squint-Quant}--\eqref{eq:azim-squint-Quant} and solving for $(\varphi_D,\theta_D)$ gives
\begin{align}
\sin\varphi_D(\ell,m_z)
&=
\frac{1}{A_\ell}\Big(\frac{\zeta_{I,O}}{\rho}-\frac{m_z}{\alpha_z}\Big)-\sin\varphi_I,
\label{eq:uc_inv_phi}\\
\sin\theta_D(\ell,m_z,m_y)
&=\!
\frac{1}{\cos\varphi_D}\!\!\left(
\!\frac{1}{A_\ell}\Big(\frac{\xi_{I,O}}{\rho}\!-\!\frac{m_y}{\alpha_y}\Big)
\!\!-\!\sin\theta_I\cos\varphi_I \!\!
\right)\!\!,
\label{eq:uc_inv_th}
\end{align}
where $A_\ell = 1+\ell B$, $\varphi_D=\arcsin(\sin\varphi_D)$ and $\theta_D=\arcsin(\sin\theta_D)$ are accepted only when $|\sin\varphi_D|\le 1$, $\cos\varphi_D>0$, and $|\sin\theta_D|\le 1$ hold. 
Because the RIS can only realize codebook profiles, each feasible $(\varphi_D,\theta_D)$ is mapped to the nearest codeword index in the precomputed \emph{cross} codebook (built at $\rho=1$ with incidence $(\varphi_I,\theta_I)$). Using the direction cosines already used for codebook construction, i.e., $s_{z,D}$ and $s_{y,D}$, we select
\begin{equation}
q=\arg\min_{q'}\Big[(s_{z,D}-s_z(q'))^2+(s_{y,D}-s_y(q'))^2\Big],
\end{equation}
where $(s_z(q'),s_y(q'))$ are the stored direction cosines of codeword $q'$ in the cross codebook. 
Since multiple tuples $(\ell,m_z,m_y)$ can map to the same $q$, we keep only one representative tuple per $q$ by re-evaluating the \emph{mismatched} peak that the selected codeword produces. More specifically, codeword $q$ at design angles $(\varphi_D(q),\theta_D(q))$ produces a peak at $(\varphi_P,\theta_P)$ at mismatch $\rho$. We then measure the residual misalignment to the desired cross-user direction $(\varphi_O,\theta_O)$ via $d_{\mathrm{cross}}(q,\ell,m_z,m_y)\triangleq
d\big((\varphi_P,\theta_P),(\varphi_O,\theta_O)\big)$, where $d(\cdot,\cdot)$ is the angular distance. Next, we must prune to a \emph{small} cross-aware candidate set of size $L_{\mathrm{keep}}$ so that the final two-RIS joint selection remains tractable in terms of time complexity. If we rank candidates only by $d_{\mathrm{cross}}^\star(q)\triangleq\min_{\ell,m_z,m_y} d_{\mathrm{cross}}(q,\ell,m_z,m_y)$, then for moderate $Q_{\min}$ and small $L_{\mathrm{keep}}$ the shortlist can become dominated by profiles that are excellent for the cross peak but correspond to design directions far from the home steering direction. To bias the shortlist toward candidates that remain compatible with home steering while still enabling cross assistance, we rank codewords using the following weighted angular distance:
\begin{equation}
\mathcal{L}(q)\triangleq d_{\mathrm{cross}}^\star(q)+w_{\mathrm{home}}\,d_{\mathrm{home}}(q),
\end{equation}
where $d_{\mathrm{home}}(q)\triangleq d\big((\varphi_D(q),\theta_D(q)),(\varphi_{O,\mathrm{home}},\theta_{O,\mathrm{home}})\big)$, and $w_{\mathrm{home}} \in [0,1]$ is the home-priority weight. We then
keep the $L_{\mathrm{keep}}$ smallest $\mathcal{L}(q)$ codewords as the cross-aware candidate set. Finally, we also include the home-maximizing codeword ($q_{\mathrm{home}}$) from the agnostic selection, which only considers the home UE per RIS, as a robust fallback.
Algorithm~\ref{alg:uc_bsaware_candidates} summarizes the per-RIS beam-split-aware candidate generation and pruning used to build a small cross-assistance shortlist without exhaustive angular sweeps at $\rho\neq 1$. We run Algorithm~\ref{alg:uc_bsaware_candidates} independently at RIS~1 and RIS~2 (with the appropriate $\rho$ for cross-band evaluation). Then, we select the operating pair that maximizes the sum-rate in \eqref{eq:Sum-Rate} over the candidate sets of codewords generated in Algorithm~\ref{alg:uc_bsaware_candidates}. \textcolor{black}{It is worth noting that Algorithm~1 does not impose constructive-interference constraints, and it uses the closed-form peak characterization to form a compact set of candidate codewords that can improve the cross-band cascaded gains. Extending the proposed framework toward constructive-interference-aware designs is an interesting direction for future work.}

\begin{algorithm}[t]
\caption{Beam-split-aware Codeword Selection in Multi-band RIS-assisted Networks}
\label{alg:uc_bsaware_candidates}
\begin{algorithmic}[1]
\State Define $\zeta_{I,O}$ and $\xi_{I,O}$ as in \eqref{eq:Arr-Gain} using $O=(\varphi_O,\theta_O)$.
\State Initialize $d^\star_{\mathrm{cross}}(q)\leftarrow +\infty$ for all $q$.
\For{$\ell \in \mathcal{K}_\eta$, and $(m_z,m_y) \in \mathcal{M}_z(\ell) \times \mathcal{M}_y(\ell)$}
  \State Compute $(\varphi_D,\theta_D)$ via \eqref{eq:uc_inv_phi}--\eqref{eq:uc_inv_th}.
  \State $(s_{z,D},s_{y,D})\leftarrow(\sin\varphi_D,\ \cos\varphi_D\sin\theta_D)$ and snap
  $q \leftarrow \arg\min_{q'} (s_{z,D}-s_z(q'))^2+(s_{y,D}-s_y(q'))^2$.
  \State Predict $(\varphi_P,\theta_P)$ for codeword $q$ at mismatch $\rho$ using \eqref{eq:elev-squint-Quant}--\eqref{eq:azim-squint-Quant}.
  \State $d^\star_{\mathrm{cross}}(q)\leftarrow \min\!\Big\{d^\star_{\mathrm{cross}}(q),\ d\big((\varphi_P,\theta_P),(\varphi_O,\theta_O)\big)\Big\}$.
\EndFor
\State $d_{\mathrm{home}}(q)\leftarrow d\big((\varphi_D(q),\theta_D(q)),(\varphi_{O,\mathrm{home}},\theta_{O,\mathrm{home}})\big)$.
\State $\mathcal{L}(q)\leftarrow d^\star_{\mathrm{cross}}(q) + w_{\mathrm{home}}\,d_{\mathrm{home}}(q)$ and
$\mathcal{Q}\leftarrow \text{indices of the } L_{\mathrm{keep}}\text{ smallest } \mathcal{L}(q)$.
\State $\mathcal{Q}\leftarrow \mathcal{Q}\cup\{q_{\mathrm{home}}\}$.
\end{algorithmic}
\end{algorithm}

\section{Numerical Results and Discussions}

In this section, we validate the closed-form expressions and the proposed correction by comparing predicted peak angles with exhaustive $2$D angular search, and by quantifying split probabilities and rate gains in the MBN use case. Unless stated otherwise, angles are drawn uniformly in the direction-cosine domain over $\varphi,\theta\in[-80^\circ,80^\circ]$ (to avoid degenerate cases), and Monte-Carlo averaging is performed with $5\times10^3$ trials per point. For quantized results, nearest-neighbor $b$-bit phase quantization is applied, and only the dominant harmonics in $\mathcal{K}_\eta$ (\eqref{eq:Worthy-harmonics} with $\eta > \tfrac{1}{3}$) are used for peak characterization. {Moreover, in the benchmarks, ``Cont" refers to the continuous RIS phase profile case, and ``$b$-bit no corr" and ``$b$-bit corr" refer to RIS phase profiles with $b$-bit quantization, without and with applying the proposed correction method, respectively.
}

\subsection{Beam-Squint/Split Analytical Validation}

Fig.~\ref{fig:err-Vs-rho} shows the mean angular error between the predicted peak locations and the true peaks obtained by exhaustive search as a function of the frequency ratio $\rho$, for two element spacings $\alpha=\alpha_z=\alpha_y\in\{0.5,0.85\}$. In the continuous-phase case, the error is essentially zero for all $\rho$ because the RIS array sum has an affine phase progression across the UPA, and the derived integer-slope peak conditions are necessary and sufficient; thus, the closed-form peak families coincide with the maximizers found by the exhaustive search.

Under $1$-bit quantization, exact phase affinity is lost and the beampattern becomes a superposition of harmonic array responses. Consequently, the maximizer of the total pattern can deviate from the per-harmonic predicted peak locations, yielding a non-zero error that generally increases with $\rho$. This trend is consistent with the reduction of dominant-harmonic curvature as $\rho$ grows, which increases sensitivity to interference from the remaining harmonics. The proposed correction alleviates this by applying a perturbation-based shift that accounts for the residual-harmonic gradient while leveraging the dominant-harmonic curvature, resulting in consistently smaller errors than the uncorrected estimates.
Increasing the spacing from $\alpha=0.5$ to $\alpha=0.85$ reduces the error for both quantized predictors by increasing the electrical aperture at $f_I$, which sharpens the mainlobe and increases curvature near the maximum, making the peak location less sensitive to multi-harmonic perturbations. 

\begin{figure}
    \centering
    \includegraphics[width=0.95\linewidth]{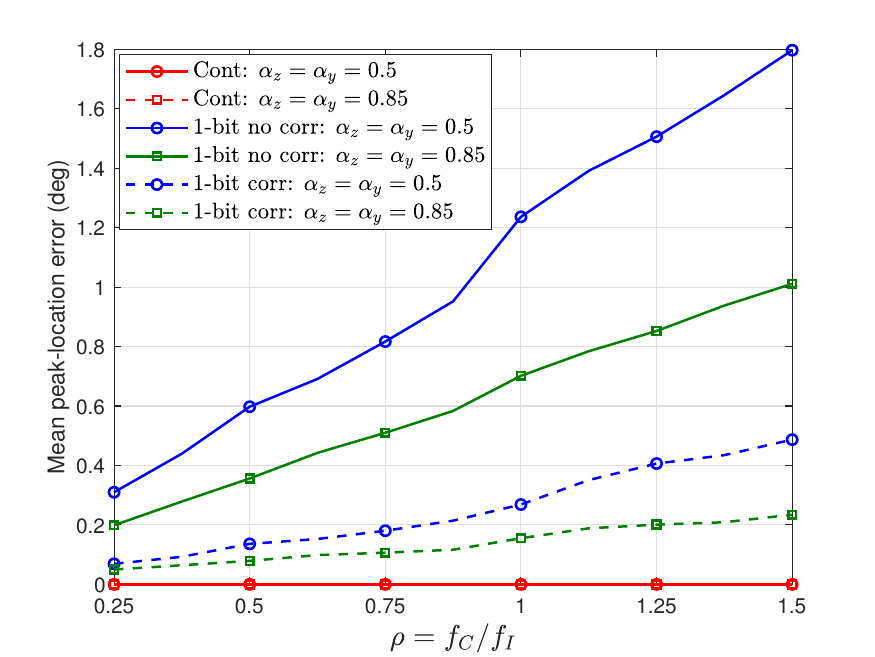}
    \caption{Mean error between the predicted peaks and the actual peaks for both continuous and quantization cases as a function of frequency ratio $\rho$ for different element spacing, and $N_z = N_y = 10$.}
    \label{fig:err-Vs-rho}
\end{figure}

\subsection{Beam-split Existence (Elevation and Azimuth)}

Figs.~\ref{fig:Elevation-Feas}--\ref{fig:Azimuth-Feas} verify the feasibility-based split criteria in {Section~III} for the continuous case.
As shown in Fig.~\ref{fig:Elevation-Feas}, elevation split occurs when the feasible-integer set in \eqref{eq:Elev-Feas} contains at least two integers, i.e., $\lvert\mathcal{M}_z\rvert\ge 2$. {It can be observed that the split probability is essentially zero below $\frac{2\alpha_z}{\rho}=1$, becomes one above $\frac{2\alpha_z}{\rho}=2$, and transitions smoothly in the intermediate region $1\le \frac{2\alpha_z}{\rho}<2$ where the outcome depends on the random interval center induced by $(\varphi_I,\varphi_D)$.}
Fig.~\ref{fig:Azimuth-Feas} shows that, conditioned on a valid elevation peak, azimuth split occurs when the feasible-integer set in \eqref{eq:Azim-Feas} satisfies $\lvert\mathcal{M}_y\rvert\ge 2$. In contrast to elevation, the azimuth feasibility-interval length scales with $A_O=\cos\varphi_O^*$, so the effective length is reduced by the elevation projection and varies across trials. This shifts the high-split region to a larger $\alpha_y$. The overlaid contours indicate where $\frac{2\alpha_y}{\rho}A_O$ exceeds the {boundaries $1$ and $2$ for the feasibility interval in Section~III.C} with probability one (dashed) and in expectation (solid), under the same conditioning.

\begin{figure}
    \centering
    \includegraphics[width=0.95\linewidth]{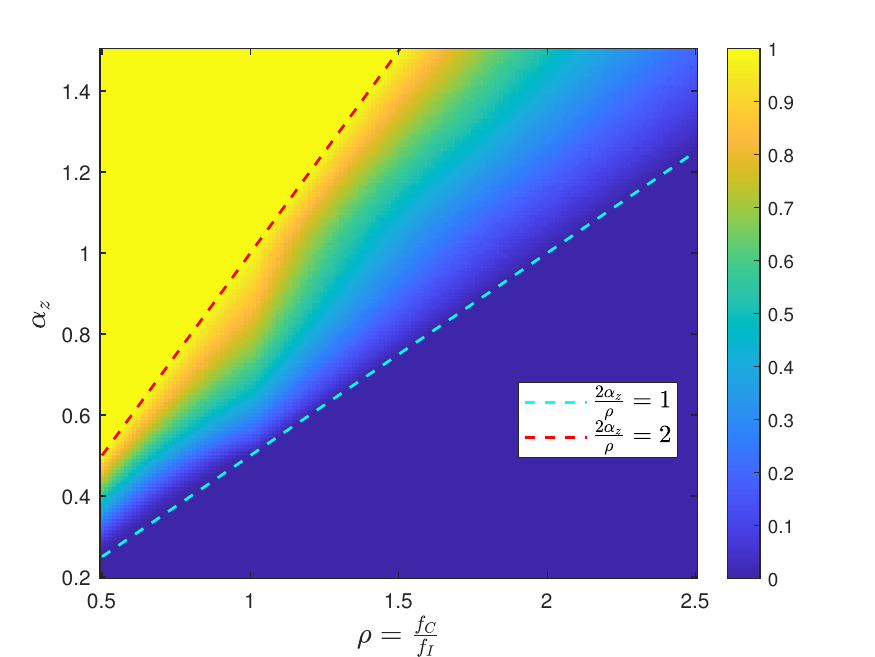}
    \caption{Elevation split probability $\mathbb{P}\!\left(|\mathcal{M}_z|\ge 2\right)$ versus $(\rho,\alpha_z)$, where $\mathcal{M}_z$ is the feasible-integer set in \eqref{eq:Elev-Feas}; dashed lines mark $\frac{2\alpha_z}{\rho}=1$ and $\frac{2\alpha_z}{\rho}=2$.}
    \label{fig:Elevation-Feas}
\end{figure}

\begin{figure}
    \centering
    \includegraphics[width=0.95\linewidth]{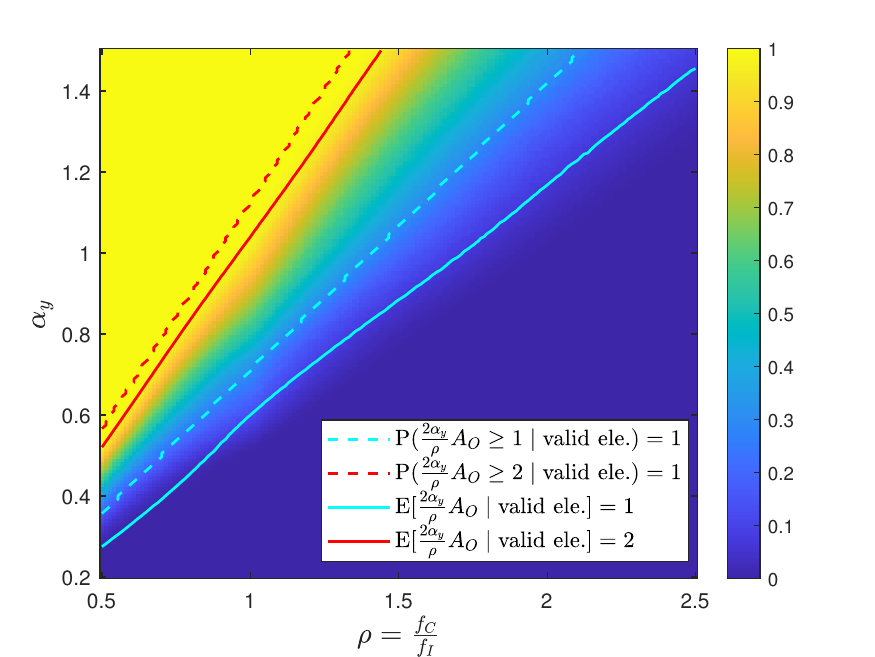}
    \caption{Azimuth split probability $\mathbb{P}\!\left(|\mathcal{M}_y|\ge 2 \mid \text{valid elevation}\right)$ versus $(\rho,\alpha_y)$, where $\mathcal{M}_y$ is the feasible-integer set in \eqref{eq:Azim-Feas}; contours summarize thresholds of $\frac{2\alpha_y}{\rho}A_O$ under the same conditioning.}

    \label{fig:Azimuth-Feas}
\end{figure}




\subsection{Per-frequency Gain and Beam-split Probability}

{
\color{black}

To quantify the performance impact of frequency mismatch, Fig.~\ref{fig:GainLoss_Psplit_Rho} plots the design-direction gain degradation and beam-split probability versus $\rho$. The degradation with respect to the continuous reference is defined as $D_x(\rho)=10\log_{10}\left(\frac{u^{\mathrm{ref}}_{\mathrm{cont}}}{u(\varphi_D,\theta_D,\boldsymbol{\Psi}_{x},\rho)}\right)$, $x\in\{\mathrm{cont},1\text{-bit}\}$, where $u^{\mathrm{ref}}_{\mathrm{cont}}$ is the continuous phase normalized gain at the design direction when $\rho=1$. The beam-split probability is computed as $P_{\mathrm{split}}(\rho)=\Pr\{N_{\mathrm{pk}}(\rho)>1\}$, where $N_{\mathrm{pk}}(\rho)$ is the number of distinct feasible peak angle pairs predicted by the analytical framework.
As shown in Fig.~\ref{fig:GainLoss_Psplit_Rho}, the continuous-phase degradation is zero at $\rho=1$, since the RIS is evaluated at its configuration frequency. As $\rho$ deviates from one, the gain at the original design direction decreases because the maximum-gain lobe moves away from $(\varphi_D,\theta_D)$. The degradation is asymmetric around $\rho=1$: when $\rho<1$, the evaluation frequency is higher than the configuration frequency, so the RIS aperture is effectively larger relative to the wavelength and the reflected lobes become narrower. Hence, a frequency-induced angular displacement causes a sharper gain drop at the design direction. In addition, smaller $\rho$ enlarges the feasible integer intervals in the peak-family conditions, making multiple feasible peaks more likely. Therefore, the continuous RIS exhibits nonzero beam-split probability mainly for $\rho<1$, whereas for $\rho>1$ the degradation is primarily due to beam-squint.

For the $1$-bit RIS, the continuous reference degradation is nonzero at $\rho=1$ because it includes the finite-resolution quantization loss. Moreover, to separate this loss from the additional degradation caused by frequency mismatch, we also plot the 1-bit reference degradation $D_{\mathrm{1bit}}(\rho)=10\log_{10}\left(\frac{u^{\mathrm{ref}}_{\mathrm{1bit}}}{u(\varphi_D,\theta_D,\boldsymbol{\Psi}_{\mathrm{1bit}},\rho)}\right)$, where $u^{\mathrm{ref}}_{\mathrm{1bit}}$ is the 1-bit phase normalized gain at the design direction when $\rho=1$. In particular, the vertical gap between the two 1-bit curves represents the average quantization loss, while the variation of the 1-bit reference curve shows how the quantized profile itself is affected by frequency mismatch. The $1$-bit beam-split probability is much higher than that of the continuous RIS because phase quantization introduces additional harmonic peak families. These harmonic components create extra feasible maximum-gain beams and also modify the gain distribution around the design direction. Therefore, for 1-bit, the continuous reference degradation does not necessarily follow the continuous phase trend. In particular, at some mismatched frequencies, the harmonic components can preserve more gain at the design direction than the shifted continuous phase beams.
Furthermore, since each frequency component in a wideband setting corresponds to a different value of $\rho=f_C/f$, the solid curves can be interpreted as per-frequency design direction degradation, while the dashed curves show the corresponding per-frequency beam-split probability.
}

\begin{figure}
    \centering
    \includegraphics[width=0.95\linewidth]{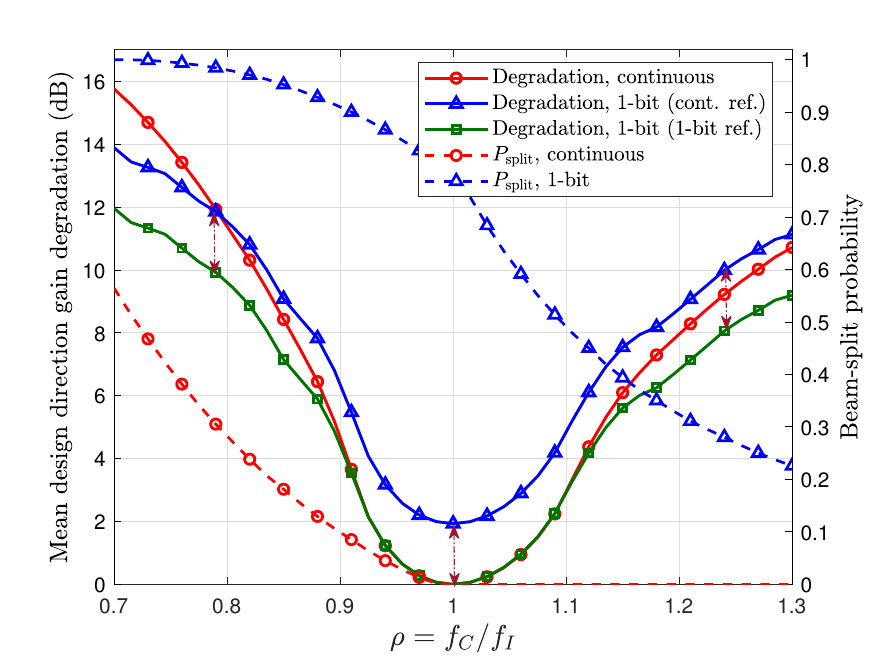}
    \caption{\textcolor{black}{Mean design direction gain degradation and beam-split probability versus frequency ratio $\rho=f_C/f_I$ for $\alpha_y=\alpha_z=0.5$ and $N_y=N_z=12$.}}
    \label{fig:GainLoss_Psplit_Rho}
\end{figure}

\subsection{Number of RIS Elements in $b$-bit Quantization}

\begin{figure}[t]
\centering
\includegraphics[width=0.5\textwidth]{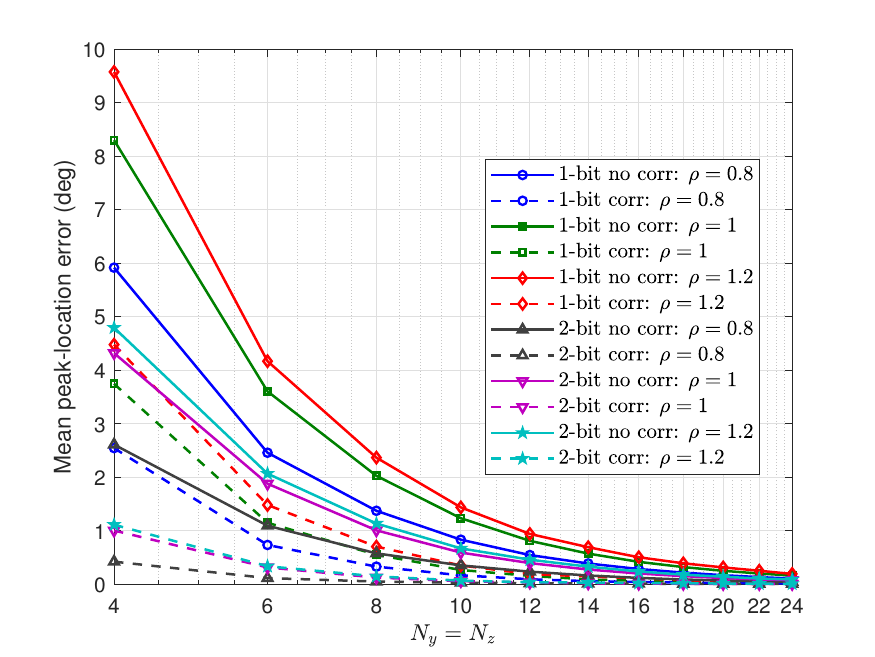}
\caption{\textcolor{black}{Mean error between the predicted peaks and the actual peaks for per-harmonic and corrected peaks versus the number of RIS elements, for different frequency ratio $\rho$.}}
\label{fig:err-Vs-NumElemts}
\end{figure}

\begin{figure}[t]
\centering
\includegraphics[width=0.5\textwidth]{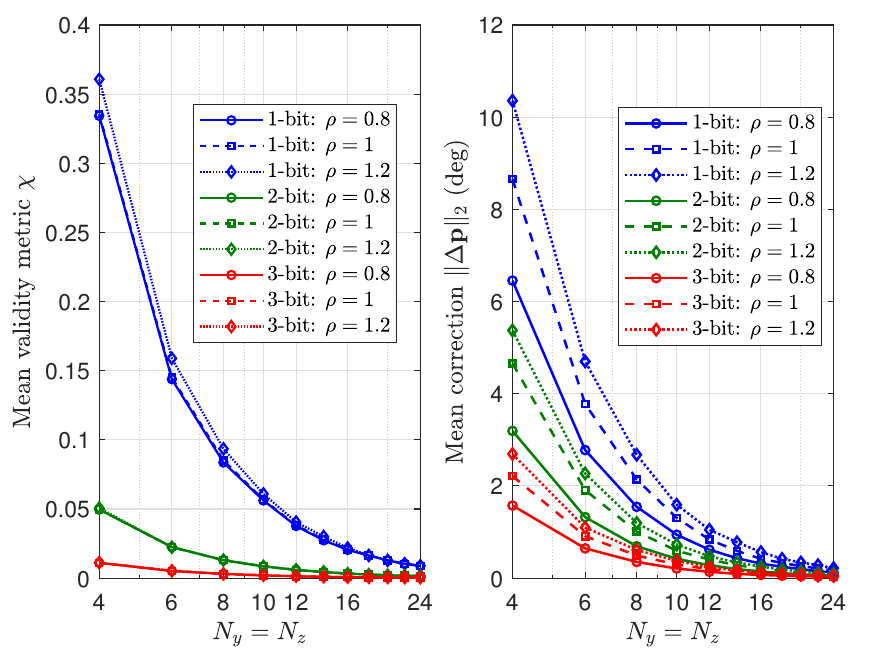}
\caption{\textcolor{black}{Mean validity metric $\chi(\boldsymbol p)$ and angular correction norm $\|\Delta\boldsymbol p\|_2$ versus RIS size $N_y=N_z$ for different quantization resolutions and frequency ratios.}}
\label{fig:chi-step-Vs-NumElemts}
\end{figure}

{\color{black}
Fig.~\ref{fig:err-Vs-NumElemts} plots the mean peak-location error versus
$N_y=N_z$ for different quantization resolutions and frequency ratios, with
and without the proposed correction. For all considered cases, the error
decreases as $N_y=N_z$ increases. This follows from the curvature argument
in Section~V: a larger RIS aperture produces narrower isolated harmonic lobes
and stronger local curvature around each per-harmonic peak. Hence, the
maximizer of the superposed quantized beampattern becomes less sensitive to
the impact of residual harmonics and moves closer to the closed-form per-harmonic
prediction.

The correction gain can be observed from the gap between the ``no corr'' and
``corr'' curves in Fig.~\ref{fig:err-Vs-NumElemts}. The improvement is most
visible for small and moderate RIS sizes, where the harmonic lobes are wider
and the aggregate peak can be displaced more noticeably by the residual
harmonics. As $N_y=N_z$ increases, both the uncorrected and corrected errors
decrease, and the incremental benefit of the correction naturally becomes
smaller. The effect of quantization resolution is also consistent with the
harmonic decomposition. For $1$-bit quantization, the dominant harmonic set
contains two co-dominant harmonics, which increases the impact of the superposition of harmonics. Therefore,
the $1$-bit curves exhibit larger uncorrected errors and a more visible
correction gain. For higher phase resolutions, the non-dominant harmonics are
weaker, so the uncorrected per-harmonic prediction is already more accurate
and the correction mainly acts as a smaller refinement. Finally, the impact of $\rho$ follows the curvature scaling discussed in Section~V:
larger $\rho$ weakens the angle-domain curvature through the $1/\rho^2$
factor and can therefore increase the residual displacement.

Furthermore, Fig.~\ref{fig:chi-step-Vs-NumElemts} quantifies the local validity and angular size of the perturbation correction. The left panel reports the mean validity metric $\chi(\boldsymbol p)$ defined in \eqref{eq:chi-validity-1}. Its decrease with $N_y=N_z$ confirms that the residual-induced perturbation becomes smaller relative to the isolated harmonic lobe as the RIS aperture grows. The metric
is largest for coarse quantization, especially the $1$-bit case, due to
stronger harmonic superposition, and decreases for higher phase resolutions.
The right panel of Fig.~\ref{fig:chi-step-Vs-NumElemts} reports
$\|\Delta\boldsymbol p\|_2$, which is not used as the validity metric but only
as an angular-domain measure of the correction size. This distinction is
important because the same angular shift can have different significance
depending on the local lobe width. The metric $\chi(\boldsymbol p)$ accounts
for the local curvature of the isolated harmonic lobe and therefore measures
locality in the lobe-power sense, whereas $\|\Delta\boldsymbol p\|_2$ shows
the physical angular displacement. The angular correction norm decreases with
$N_y=N_z$ and reflects the effect of $\rho$ more directly: increasing $\rho$
weakens the angle-domain curvature through the $1/\rho^2$ scaling discussed
in Section~V, which can produce a larger correction step.
}

{\color{black}
\subsection{Beam-split-aware Codeword Selection}

Fig.~\ref{fig:AveRate_Vs_NumElemts} shows the average two-user sum-rate versus $N_y=N_z$ under the LoS-dominant multipath channel described in Section~VI. The proposed candidates are generated from the LoS geometry, while the final rate is evaluated using the complete multipath cascaded channel.
We compare four codeword-selection methods within the considered passive finite-resolution RIS architecture, i.e., \textbf{(1)} The home-link-only baseline represents a single-frequency home-band codeword selection, where each RIS selects only the codeword maximizing its intended home cascaded channel at the configuration frequency without accounting for cross-band squint/split.
\textbf{(2)} The squint-aware split-agnostic baseline additionally includes one central cross-band squinted candidate with $(\ell,m_z,m_y)=(0,0,0)$, but ignores split and harmonic lobes. 
\textbf{(3)} The proposed split-aware method includes the home-link codeword and $L_{\mathrm keep}=5$ cross-band candidates generated from the dominant split/harmonic peak families as described in Section~ VI, with $\ell\in\{-1,0\}$ for the considered $1$-bit RIS.
\textbf{(4)} The exhaustive finite-codebook search maximizes the sum-rate over the union of the home- and cross-incidence codebooks at both RISs, and is used as a finite-codebook upper bound. \footnote{\textcolor{black}{TTD-based wideband beamforming architectures are not included because they modify the RIS hardware through frequency-dependent delay elements and therefore fall outside the scope of the passive finite-resolution RIS architecture considered in this work. Accordingly, exhaustive finite-codebook search serves as the upper bound.}
}

It can be observed that the home-link-only baseline has the lowest rate because the cross-band assistance terms in \eqref{eq:Sum-Rate} are not intentionally aligned and arise only incidentally. The squint-aware split-agnostic baseline improves over home-link-only baseline because it includes one central cross-band squinted candidate. However, it cannot exploit the additional feasible peaks created by beam split and phase quantization. The proposed beam-split-aware method further improves the sum-rate because it includes a small set of analytically predicted split/harmonic candidates. These candidates increase the probability that the mismatched-band RIS response is aligned with the corresponding cross user, thereby improving the cross-band cascaded terms in \eqref{eq:Sum-Rate}. The exhaustive finite-codebook search provides the highest rate because it directly searches all codeword pairs in the finite union codebooks and maximizes the actual multipath sum-rate.
The gain often becomes more pronounced for larger RISs because increasing $N_y=N_z$ narrows the reflected beams, making the received power more sensitive to angular misalignment. 

Increasing $Q_{\min}$ from $8$ to $16$ improves all methods by reducing codeword snapping error. At the same time, the relative gaps may decrease because the home-link and central-squint baselines also benefit from the denser codebook. Nevertheless, the proposed method remains above the home-link-only and squint-aware baselines for both resolutions, showing that the gain comes from split/harmonic candidate generation rather than only from codebook resolution. The exhaustive finite-codebook search remains above the proposed method, as expected, because it searches a much larger candidate space. However, the proposed method achieves a significant portion of the exhaustive-search gain with substantially lower candidate-pair search complexity. With $L_{\mathrm{keep}}=5$, each RIS keeps at most one home-link codeword plus five split-aware cross-band codeword, so the proposed method evaluates at most $(1+L_{\mathrm{keep}})^2=36$ phase-profile pairs across the two RISs. In contrast, the exhaustive finite-codebook search considers the union of the home- and cross-incidence codebooks at each RIS. Thus, in the considered two-RIS case, the exhaustive search evaluates up to $(2\times64)^2=16384$ and $(2\times256)^2=262144$ phase-profile pairs for $Q_{\min}=8$ and $Q_{\min}=16$, respectively.

}

\begin{figure}[t]
\centering
\includegraphics[width=0.5\textwidth]{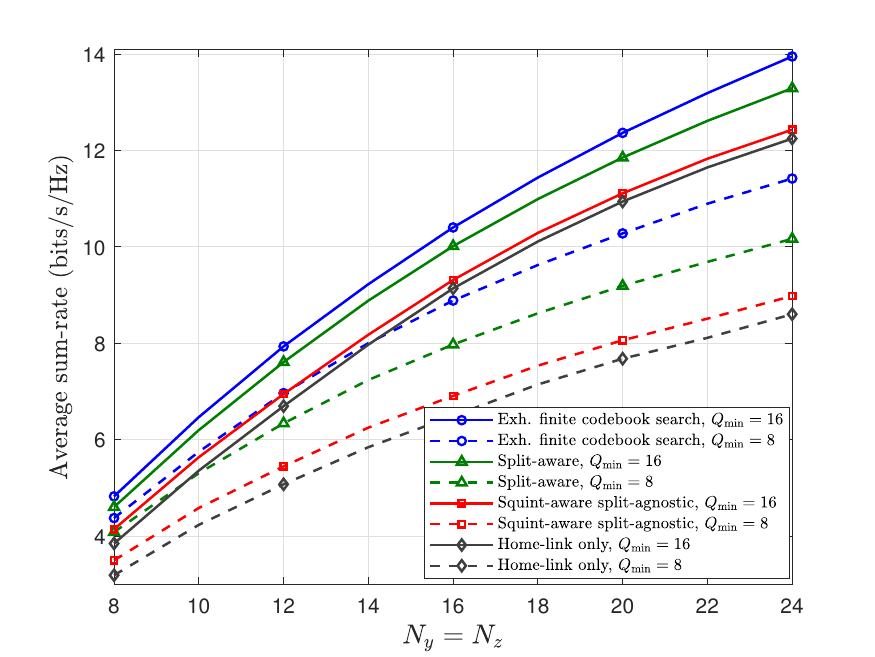}
\caption{\textcolor{black}{Average two-user sum-rate versus RIS size in the two-RIS two-UE multi-band network under LoS-dominant multipath channels. The proposed beam-split-aware selection is compared with home-link only, squint-aware split-agnostic, and exhaustive finite-codebook baselines for $Q_{\min}\in\{8,16\}$, $f_1=30$~GHz, $f_2=40$~GHz, $\rho_{12}=0.75$, and $b=1$.}}
\label{fig:AveRate_Vs_NumElemts}
\end{figure}

\section{Conclusion}
This paper characterized beam-squint and beam-split in UPA-based RISs operating under the configuration--incident frequency mismatch. For continuous phase, we derived necessary and sufficient peak conditions and explicit elevation--azimuth peak families, together with feasibility criteria governed by frequency ratio, element spacing, and incident--design angle pairs. For practical $b$-bit quantization, we reformulated the beampattern via a Fourier-series decomposition of the quantization error phasor, yielding closed-form per-harmonic peak families and existence conditions. To bridge isolated per-harmonic characterizations to the maximizer of the superposed pattern, we developed a perturbation-based angle correction that exploits dominant-harmonic curvature and residual-harmonic gradients. Numerical results validated the peak/split characterizations, quantified the impact of frequency ratio, element spacing, RIS size, and quantization depth on angular displacement, and show that beam-split-aware codeword selection improves sum-rate in a two-RIS two-UE MBN.

\section*{Appendix A: Proof of \textbf{Lemma~\ref{lem:PhaseDiff}}}
\label{Appendix-A}
    The necessity part of Lemma~\ref{lem:PhaseDiff} follows directly from $\phi_{n_y,n_z} \equiv \phi_{0,0} \pmod{2\pi}, \ \forall n_y,n_z$, since the condition holds for all elements. 
For the sufficiency, assume there exists fixed integers $m_y$ and $m_z$ such that $\phi_{n_y+1,n_z}-\phi_{n_y,n_z}=2\pi m_y$ and $\phi_{n_y,n_z+1}-\phi_{n_y,n_z}=2\pi m_z$ hold for all $(n_y,n_z)$. 
    First, fix $n_z = 0$. The relation for unit-step increments along the $y$-direction becomes: $\phi_{n_y+1,0}=\phi_{n_y,0} + 2\pi m_y$. Applying this relation repeatedly from $n_y = 0$, after $n_y$ steps we obtain: $
        \phi_{n_y,0} = \phi_{0,0} + n_y(2\pi m_y).$
    Now, fix $n_y$ and start from $n_z = 0$. Applying the unit-step increment along the $z$-direction yields:
    \begin{equation}\label{eq:Lemma1-z-increm}
        \phi_{n_y,n_z} = \phi_{n_y,0} + n_z(2\pi m_z).
    \end{equation}
    Substituting $\phi_{n_y,0}$ into \eqref{eq:Lemma1-z-increm}, we obtain:
    \begin{equation}
        \phi_{n_y,n_z} = \phi_{0,0} + 2\pi(n_y m_y + n_z m_z).
    \end{equation}
    Since $n_y m_y + n_z m_z$ is always an integer, it follows 
 $\phi_{n_y,n_z} \equiv \phi_{0,0} \pmod{2\pi}$, thus proving the sufficiency part of Lemma~\ref{lem:PhaseDiff}.
 

\section*{Appendix B: Proof of \textbf{Lemma~\ref{lem:FourierConve}}}
\label{Appendix-B}

    \textit{Periodicity of $g(x)$}: For the round function $\lfloor x\rceil$ and any integer $k$, we have $\lfloor x\rceil = k$ for $x \in [k - \frac{1}{2}, k + \frac{1}{2})$, with a fixed (but arbitrary) convention at the endpoints $x = k \pm \frac{1}{2}$. This implies that $\lfloor x + 1\rceil = \lfloor x\rceil + 1$ for all $x$. Hence, 
    \begin{equation}
        \varepsilon(x + 1) = \Delta(\lfloor x + 1\rceil - (x + 1)) = \Delta(\lfloor x\rceil - x) = \varepsilon(x),
    \end{equation}
    which shows that $\varepsilon(x)$ is 1-periodic, and so is $g(x) = e^{j\varepsilon(x)}$.

    \textit{Bounded variation of $g(x)$ on one period}: Let us define total variation and bounded variation:
    \begin{definition}
        The variation $V_{[a,b]}(f)$ of a function $f: [a,b] \rightarrow \mathbb{C}$ is 
        $V_{[a,b]}(f) = \sup_{a = x_0 < \dots < x_m = b} \sum_{i=1}^m \abs{f(x_i) - f(x_{i-1})}$.
        A function $f$ is said to be of bounded variation on $[a,b]$ if $V_{[a,b]}(f) < \infty$ \cite{EvansGariepy2015}.
    \end{definition}

    Now, we show that $g(x)$ is of bounded variation on the period $[0,1]$. We use the following identity, which follows from trigonometric identities and the inequality $\abs{\sin x} \leq \abs{x}$:
    \begin{equation}\label{eq:trig-ident}
        \abs{e^{ju} - e^{jv}} = 2\abs{\sin\left(\frac{u - v}{2}\right)} \leq \abs{u - v}.
    \end{equation}
    Partition the interval $[0,1]$ at $x = \frac{1}{2}$. For $x \in [0, \frac{1}{2})$, we have $g(x) = e^{-j\Delta x}$, so
    \begin{align}
        \sum\limits_{x_i \in [0, \frac{1}{2})} \abs{g(x_i) - g(x_{i-1})}
        & \overset{(a)}{\leq} \sum\limits_{x_i \in [0, \frac{1}{2})} \abs{-\Delta (x_i - x_{i-1})} \notag \\
        & \overset{(b)}{=} \abs{-\Delta \left(\tfrac{1}{2}^- - 0\right)} = \tfrac{1}{2} \abs{\Delta},
    \end{align}
    where (a) follows from \eqref{eq:trig-ident} and (b) holds because $-\Delta x$ is monotonic.
    For $x \in (\frac{1}{2}, 1]$, we have $g(x) = e^{j\Delta(1 - x)}$, and similarly, $
    \sum\limits_{x_i \in (\frac{1}{2},1]} \abs{g(x_i) - g(x_{i-1})} \leq \tfrac{1}{2} \abs{\Delta}.$
    At the jump point $x = \frac{1}{2}$, we have $g(\frac{1}{2}^-) = e^{-j\frac{\Delta}{2}}$ and $g(\frac{1}{2}^+) = e^{j\frac{\Delta}{2}}$, so $
        \abs{g(\tfrac{1}{2}^+) - g(\tfrac{1}{2}^-)} = 2 \abs{\sin \tfrac{\Delta}{2}}.$
    Summing across all partitions, we obtain: $\sum_{i=1}^m \abs{g(x_i) - g(x_{i-1})} \leq \abs{\Delta} + 2\abs{\sin \tfrac{\Delta}{2}} \leq 2\abs{\Delta} < \infty.$
    Therefore, $g(x)$ is of bounded variation over the period $[0,1]$.

    \textit{Convergence of the Fourier series}: To show that the Fourier series of $g(x)$ converges, we use the Dirichlet–Jordan test:
    \begin{theorem}[Dirichlet–Jordan Test]\label{theo:Diri-Jord}
        If a $2\pi$-periodic function $f:\mathbb{R} \rightarrow \mathbb{C}$ is of bounded variation  over $[0,2\pi]$, then its Fourier series converges at every point $x$ to $\frac{1}{2}(f(x^+) + f(x^-))$ \cite{berg2009fourier}.
    \end{theorem}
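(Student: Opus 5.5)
We would prove the Dirichlet–Jordan test by reducing to monotone real functions and then running the classical Dirichlet-kernel argument.

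\textbf{Reduction to monotone functions.} Since the Fourier partial sums are linear in $f$, we first split $f=\Re f+j\Im f$. Each part is real-valued, $2\pi$-periodic and of bounded variation on $[0,2\pi]$. By the Jordan decomposition, each part is a difference of two bounded nondecreasing functions on $[0,2\pi]$. In particular, $f$ is bounded and Riemann/Lebesgue integrable, so its Fourier coefficients exist, and the one-sided limits $f(x^\pm)$ exist at every $x$.

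\textbf{Dirichlet-kernel representation.} Fix $x$ and define
\[
\Phi_x(t)=f(x+t)+f(x-t)-f(x^+)-f(x^-), \qquad t\in(0,\pi].
\]
Using the Dirichlet kernel $D_n(t)=\frac{\sin((n+\frac12)t)}{2\pi\sin(t/2)}$, with $\int_{-\pi}^{\pi}D_n=1$, and the evenness of $D_n$, we write
\[
S_nf(x)-\tfrac12\big(f(x^+)+f(x^-)\big)=\int_0^{\pi}\Phi_x(t)\,D_n(t)\,dt .
\]
It therefore suffices to show this integral tends to $0$. Near $t=0$, $\Phi_x(t)\to0$ as $t\to0^+$. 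By the Jordan decomposition, $\Phi_x$ restricted to $(0,\pi]$ is a difference of two monotone functions, each tending to $0$ at $0^+$. Given $\epsilon>0$, we choose $\delta$ so that the total variation of $\Phi_x$ on $(0,\delta]$ is at most $\epsilon$. On $[\delta,\pi]$, the function $\Phi_x(t)/\sin(t/2)$ is integrable, so the Riemann–Lebesgue lemma kills that piece as $n\to\infty$.

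\textbf{The main obstacle: the piece on $(0,\delta]$.} Here $D_n$ is not absolutely integrable uniformly in $n$, since $\|D_n\|_1\sim\log n$, so a crude bound fails. We will first replace $1/(2\sin(t/2))$ by $1/t$; the difference is a bounded function on $(0,\pi]$, so that piece is again handled by Riemann–Lebesgue. We then apply the second mean value theorem (Bonnet's form) to each monotone component $\Phi^{(i)}$ that vanishes at $0^+$:
\[
\int_{0}^{\delta}\Phi^{(i)}(t)\frac{\sin((n+\frac12)t)}{t}\,dt=\Phi^{(i)}(\delta^-)\int_{\eta}^{\delta}\frac{\sin((n+\frac12)t)}{t}\,dt
\]
for some $\eta\in[0,\delta]$. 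The key uniform estimate is that $\big|\int_a^b \frac{\sin s}{s}\,ds\big|\le C$ for all $0\le a<b$, which follows from the convergence of the improper integral $\int_0^\infty\frac{\sin s}{s}ds$. This bounds the $(0,\delta]$ piece by $C'\epsilon$ uniformly in $n$. Letting $n\to\infty$ and then $\epsilon\to0$ gives convergence to $\tfrac12(f(x^+)+f(x^-))$.

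\textbf{Application to Lemma~\ref{lem:FourierConve}.} To apply the theorem to the 1-periodic $g$, we rescale via $f(x)=g(x/2\pi)$. Rescaling preserves both periodicity and bounded variation, so the pointwise convergence claimed in Lemma~\ref{lem:FourierConve} follows.
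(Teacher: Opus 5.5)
Your argument is correct. Note, however, that the paper never proves this theorem: it quotes it as a standard result from \cite{berg2009fourier} and uses it only as a black box in Appendix~B, applied to $G(y)=g(y/2\pi)$ exactly as in your last paragraph.

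What you supply is the classical proof. It uses the Jordan decomposition and the Dirichlet-kernel identity $S_nf(x)-\frac12(f(x^+)+f(x^-))=\int_0^\pi\Phi_x D_n$. Riemann--Lebesgue handles $[\delta,\pi]$ and the bounded correction $\frac{1}{2\pi\sin(t/2)}-\frac{1}{\pi t}$. Bonnet's second mean value theorem, together with the uniform bound on $\int_a^b\frac{\sin s}{s}\,ds$, controls the piece near $t=0$. This makes the appendix self-contained, at a length the paper avoids by citing.

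It is also more than the paper actually needs. The function $g$ in Lemma~2 is piecewise $C^1$ with one jump per period. Hence $|\Phi_x(t)|\le Ct$ for small $t$, so $\Phi_x(t)/\sin(t/2)$ is integrable on all of $(0,\pi]$. Riemann--Lebesgue alone then gives convergence to $\frac12(g(x^+)+g(x^-))$, with no Jordan decomposition or second mean value theorem.

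If you keep the general BV version, tighten a few routine points:
\begin{itemize}
\item For complex $f$ you need four monotone pieces: apply Jordan to $\Re\Phi_x$ and $\Im\Phi_x$ separately.
\item $\Phi_x$ has bounded variation on $(0,\pi]$ because the periodic extension of $f$ has bounded variation on every bounded interval.
\item The choice of $\delta$ is justified by $V_{(0,t]}(\Phi_x)\to 0$ as $t\to0^+$. Alternatively, normalize the monotone parts by subtracting their limits at $0^+$.
\item In the rescaling step, observe that the Fourier coefficients of $G$ on $[0,2\pi]$ equal $\widehat g[\ell]$, so the symmetric partial sums of the two series coincide.
\end{itemize}
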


    Define $2\pi$-periodic function $G(y) = g\left(\frac{y}{2\pi}\right)$. Then, by applying the Dirichlet–Jordan test to $G(y)$ and re-scaling to $g(x)$, we conclude that the Fourier series of $g(x)$ converges pointwise to $\tilde{g}(x) = \frac{1}{2}(g(x^+) + g(x^-))$.
    At continuity points, $\tilde{g}(x) = g(x)$, and at the jump points $x = \frac{1}{2} + k$, the sum equals $\tilde{g}(x) = \cos \frac{\Delta}{2}$. This completes the proof of \textbf{Lemma}~\ref{lem:FourierConve}.
    

\section*{Appendix C: Proof of \textbf{Lemma~\ref{lem:FourierSeriesCoeff}}}
\label{Appendix-C}

On $x \in [0,1]$, we can reformulate $g(x)$ as:
\begin{equation}
g(x)=
\begin{cases}
e^{-j\Delta x}, & 0\le x<\tfrac12,\\
e^{j\Delta}\,e^{-j\Delta x}, & \tfrac12\le x\leq1.
\end{cases}
\end{equation}
Fix $\ell\in\mathbb{Z}$ and set $a=\Delta+2\pi \ell \neq 0$. Then the Fourier series coefficient of $g(x)$ is obtained by:
\begin{align}
\widehat g[\ell]
&=\int_{0}^{\tfrac12} e^{-j a x} dx+e^{j\Delta} \int_{\tfrac12}^{1} e^{-j a x} dx
=\frac{1-e^{-j\tfrac{a}{2}}}{ja} \notag \\ & +e^{j\Delta}\frac{e^{-j\tfrac{a}{2}}-e^{-ja}}{ja}.
\end{align}
Since $e^{-ja}=e^{-j(\Delta+2\pi\ell)}=e^{-j\Delta}$,
\begin{align}
\widehat g[\ell]
\!=\!\frac{e^{-j\tfrac{a}{2}}\!\big(e^{j\Delta}\!-\!1\big)}{ja}
\!=\!\frac{2j e^{-j\tfrac{\Delta+2\pi\ell-\Delta}{2}}\!\!\sin\tfrac{\Delta}{2}}{ja} \!=\!\frac{2(-1)^{\ell}\sin\tfrac{\Delta}{2}}{\Delta+2\pi \ell}. 
\end{align}
Then, using $\widehat g[\ell]$, $g(x)$ can be represented by its Fourier series $g(x)\sim \sum_{\ell\in\mathbb{Z}}\widehat g[\ell]\;e^{j2\pi \ell x}$.
Moreover, followed by \textbf{Lemma~\ref{lem:FourierConve}}, this Fourier series converges pointwise to $\tfrac12\big(g(x^-)+g(x^+)\big)$.

{\color{black}
\section*{Appendix D: Negative Definiteness of $\boldsymbol H_{J_\ell}(\boldsymbol p)$}
\label{Appendix-D}

For the $\ell$-th harmonic, $J_\ell = \abssOne{S_\ell}^2$ can be written as:
\begin{equation}
J_\ell(\varphi_O,\theta_O)
=
\abssSq{\widehat g[\ell]}
F_{N_z}(\beta_z(\ell))
F_{N_y}(\beta_y(\ell)),
\end{equation}
 where
$F_N(\beta)=\abssSq{\sum_{n=0}^{N-1} e^{j2\pi\beta n}}.$ Moreover, at any integer $m$, we obtain $F^\prime(m) = 0$ and $F^{\prime\prime} = -\frac{2\pi^2}{3}N^2(N^2-1),$ such that $F^{\prime\prime} <0 $ for $N \geq 2$.

On the other hand, at the feasible per-harmonic peak $\boldsymbol p$, the peak conditions in \eqref{eq:z-peak-cond-Quantization} and \eqref{eq:y-peak-cond-Quantization} give
$\beta_z(\ell)\big|_{\boldsymbol p}=m_z$ and
$\beta_y(\ell)\big|_{\boldsymbol p}=m_y.$
Now consider $J_\ell$ first as a function of the two phase-slope variables
$\boldsymbol\beta_\ell=[\beta_z(\ell),\beta_y(\ell)]^T$. Since $F'_{N_z}(m_z)=0$ and $F'_{N_y}(m_y)=0$, the Hessian of $J_\ell$ with respect to $\boldsymbol\beta_\ell$ at $(m_z,m_y)$ is diagonal and given by
\begin{equation}
\boldsymbol H_{\beta}
=
-\frac{2\pi^2}{3}
\abssSq{\widehat g[\ell]}N_z^2N_y^2
\begin{bmatrix}
N_z^2-1 & 0\\
0 & N_y^2-1
\end{bmatrix}.
\end{equation}
Therefore,
$\boldsymbol H_{\beta}\prec0,$
for $N_z,N_y\geq2$ and $\widehat g[\ell]\neq0$.

From the definitions of $\beta_z(\ell)$ and $\beta_y(\ell)$, the Jacobian of $\boldsymbol\beta_\ell$ with respect to $(\varphi_O,\theta_O)$ at $\boldsymbol p=[\varphi_O^*,\theta_O^*]^T$ is:
\begin{equation}
\boldsymbol G
= \frac{\partial \boldsymbol \beta_\ell}{\partial\boldsymbol p} =
\begin{bmatrix}
\frac{\alpha_z}{\rho}\cos\varphi_O^* & 0\\
-\frac{\alpha_y}{\rho}\sin\theta_O^*\sin\varphi_O^* &
\frac{\alpha_y}{\rho}\cos\theta_O^*\cos\varphi_O^*
\end{bmatrix},
\end{equation}
with determinant 
$\det(\boldsymbol G)
=
\frac{\alpha_z\alpha_y}{\rho^2}
\cos\theta_O^*\cos^2\varphi_O^*.$
For $\alpha_z,\alpha_y>0$, $\rho>0$, and observation angles
$\varphi_O^*,\theta_O^*\in(-\frac{\pi}{2},\frac{\pi}{2})$, we have
$\det(\boldsymbol G)>0$. Hence, $\boldsymbol G$ has full rank.

By the second-order chain rule \cite{petersen_pedersen_2012},
\begin{equation}
\boldsymbol H_{J_\ell}(\boldsymbol p)
=
\boldsymbol G^T\boldsymbol H_{\beta}\boldsymbol G
+
\sum_{i\in\{z,y\}}
\frac{\partial J_\ell}{\partial \beta_i}
\boldsymbol H_{\beta_i}(\boldsymbol p),
\end{equation}
where $\boldsymbol H_{\beta_i}(\boldsymbol p)$ denotes the Hessian of $\beta_i(\ell)$ with respect to $(\varphi_O,\theta_O)$. However, at the per-harmonic peak, $\frac{\partial J_\ell}{\partial \beta_z}=0$ and $\frac{\partial J_\ell}{\partial \beta_y}=0$, so $
\boldsymbol H_{J_\ell}(\boldsymbol p)
=
\boldsymbol G^T\boldsymbol H_{\beta}\boldsymbol G.$
For any nonzero vector $\boldsymbol x\in\mathbb R^2$, the full rank of $\boldsymbol G$ implies $\boldsymbol G\boldsymbol x\neq\boldsymbol0$. Since $\boldsymbol H_{\beta}\prec0$, we obtain $
\boldsymbol x^T\boldsymbol H_{J_\ell}(\boldsymbol p)\boldsymbol x
=
(\boldsymbol G\boldsymbol x)^T
\boldsymbol H_{\beta}
(\boldsymbol G\boldsymbol x)
<0.$
Hence, $
\boldsymbol H_{J_\ell}(\boldsymbol p)\prec0.$
This proves that the isolated per-harmonic peak has strictly negative curvature in the angular domain under the stated conditions.

}
\section*{\textcolor{black}{Appendix E}: Gradient and Hessian of $J$ used in \eqref{eq:shift-Estimate}}
\label{Appendix-E}

Let $f(\varphi,\theta)\in\mathbb{C}$ and define $J_f(\varphi,\theta)=|f(\varphi,\theta)|^2$.
Then
\begin{align}
J_{f,\varphi} = 2\Re\{ f_{\varphi} f^*\},\quad
J_{f,\theta}  = 2\Re\{ f_{\theta} f^*\}.
\end{align}
The Hessian entries are: $J_{f,\varphi\varphi} = 2\Re\{ f_{\varphi\varphi} f^* + f_{\varphi} f_{\varphi}^*\}$, $J_{f,\theta\theta}  = 2\Re\{ f_{\theta\theta} f^* + f_{\theta} f_{\theta}^*\}$, and $J_{f,\varphi\theta} = 2\Re\{ f_{\varphi\theta} f^* + f_{\varphi} f_{\theta}^*\}$
with $J_{f,\theta\varphi}=J_{f,\varphi\theta}$.
Hence,
$\boldsymbol H_{J_f}(\varphi,\theta)=
\begin{bmatrix}
J_{f,\varphi\varphi} & J_{f,\varphi\theta}\\
J_{f,\varphi\theta} & J_{f,\theta\theta}
\end{bmatrix}.$

With $J=|S|^2$ and $J_\ell=|S_\ell|^2$, we have:
\begin{align}
\nabla J &=
2\begin{bmatrix}
\Re\{S_{\varphi}S^*\}\\
\Re\{S_{\theta}S^*\}
\end{bmatrix},\qquad
\nabla J_\ell =
2\begin{bmatrix}
\Re\{S_{\ell,\varphi}S_\ell^*\}\\
\Re\{S_{\ell,\theta}S_\ell^*\}
\end{bmatrix},
\end{align}
and
\begin{align}
\boldsymbol H_{J} &=
2\begin{bmatrix}
\Re\{S_{\varphi\varphi}S^*+S_{\varphi}S_{\varphi}^*\} &
\Re\{S_{\varphi\theta}S^*+S_{\varphi}S_{\theta}^*\}\\
\Re\{S_{\varphi\theta}S^*+S_{\varphi}S_{\theta}^*\} &
\Re\{S_{\theta\theta}S^*+S_{\theta}S_{\theta}^*\}
\end{bmatrix},\\
\boldsymbol H_{J_\ell} &=
2\begin{bmatrix}
\Re\{S_{\ell,\varphi\varphi}S_\ell^*+S_{\ell,\varphi}S_{\ell,\varphi}^*\} &
\Re\{S_{\ell,\varphi\theta}S_\ell^*+S_{\ell,\varphi}S_{\ell,\theta}^*\}\\
\Re\{S_{\ell,\varphi\theta}S_\ell^*+S_{\ell,\varphi}S_{\ell,\theta}^*\} &
\Re\{S_{\ell,\theta\theta}S_\ell^*+S_{\ell,\theta}S_{\ell,\theta}^*\}
\end{bmatrix}.
\end{align}
Using $\nabla R_\ell(\boldsymbol p)=\nabla J(\boldsymbol p)$, we obtain $\nabla R_\ell(\boldsymbol p)$ using
$\nabla R_\ell=\nabla J-\nabla J_\ell.$
In particular, at the per-harmonic nominal peak $\boldsymbol p$ of $J_\ell$,
$\nabla J_\ell(\boldsymbol p)=\boldsymbol 0$, and therefore
$\nabla R_\ell(\boldsymbol p)=\nabla J(\boldsymbol p)$.

In \eqref{eq:beam-patt-err}, define $\Phi_\ell
=2\pi\Big(
\beta_z^{(\ell)}(\boldsymbol{p})\,n_z
+\beta_y^{(\ell)}(\boldsymbol{p})\,n_y
\Big)$. The first derivatives of the phase slopes are:
$\beta_{z,\varphi} = \tfrac{\alpha_z}{\rho}\cos\varphi_O$, $\beta_{z,\theta} = 0$, $\beta_{y,\varphi} = -\tfrac{\alpha_y}{\rho}\sin\theta_O\sin\varphi_O$, and 
$\beta_{y,\theta} = \tfrac{\alpha_y}{\rho}\cos\theta_O\cos\varphi_O$.
The second derivatives are:
$\beta_{z,\varphi\varphi} = -\tfrac{\alpha_z}{\rho}\sin\varphi_O$, 
$\beta_{z,\theta\theta} = 0$, $\beta_{z,\varphi\theta} = 0$, $\beta_{y,\varphi\varphi} = -\tfrac{\alpha_y}{\rho}\sin\theta_O\cos\varphi_O$, $\beta_{y,\theta\theta} = -\tfrac{\alpha_y}{\rho}\sin\theta_O\cos\varphi_O$, and $\beta_{y,\varphi\theta} = -\tfrac{\alpha_y}{\rho}\cos\theta_O\sin\varphi_O$.
Thus, $ \Phi_{\ell,\varphi} = 2\pi\big(n_z\beta_{z,\varphi}+n_y\beta_{y,\varphi}\big)$, 
$\Phi_{\ell,\theta} = 2\pi\big(n_z\beta_{z,\theta}+n_y\beta_{y,\theta}\big)$,
$
\Phi_{\ell,\varphi\varphi} = 2\pi\big(n_z\beta_{z,\varphi\varphi}+n_y\beta_{y,\varphi\varphi}\big)$,
$
\Phi_{\ell,\theta\theta} = 2\pi\big(n_z\beta_{z,\theta\theta}+n_y\beta_{y,\theta\theta}\big)$, and
$
\Phi_{\ell,\varphi\theta} = 2\pi\big(n_z\beta_{z,\varphi\theta}+n_y\beta_{y,\varphi\theta}\big)$.

Using $\partial_x e^{j\Phi}=j\Phi_x e^{j\Phi}$ and
$\partial_{xx} e^{j\Phi}=(j\Phi_{xx}-\Phi_x^2)e^{j\Phi}$,
$\partial_{xy} e^{j\Phi}=(j\Phi_{xy}-\Phi_x\Phi_y)e^{j\Phi}$,
we obtain:
\begin{equation}
S_{\ell,\varphi} =
\widehat g[\ell]\!\!\sum_{n_z=0}^{N_z-1}\sum_{n_y=0}^{N_y-1}
\big(j\Phi_{\ell,\varphi}\big)e^{j\Phi_\ell},
\end{equation}
\begin{equation}
S_{\ell,\theta} =
\widehat g[\ell]\!\!\sum_{n_z=0}^{N_z-1}\sum_{n_y=0}^{N_y-1}
\big(j\Phi_{\ell,\theta}\big)e^{j\Phi_\ell},
\end{equation}
\begin{equation}
S_{\ell,\varphi\varphi} =
\widehat g[\ell]\!\!\sum_{n_z=0}^{N_z-1}\sum_{n_y=0}^{N_y-1}
\big(j\Phi_{\ell,\varphi\varphi}-\Phi_{\ell,\varphi}^2\big)e^{j\Phi_\ell},
\end{equation}
\begin{equation}
S_{\ell,\theta\theta} =
\widehat g[\ell]\!\!\sum_{n_z=0}^{N_z-1}\sum_{n_y=0}^{N_y-1}
\big(j\Phi_{\ell,\theta\theta}-\Phi_{\ell,\theta}^2\big)e^{j\Phi_\ell},
\end{equation}
\begin{equation}
S_{\ell,\varphi\theta} =
\widehat g[\ell]\!\!\sum_{n_z=0}^{N_z-1}\sum_{n_y=0}^{N_y-1}
\big(j\Phi_{\ell,\varphi\theta}-\Phi_{\ell,\varphi}\Phi_{\ell,\theta}\big)e^{j\Phi_\ell}.
\end{equation}


\bibliography{ref}
\bibliographystyle{IEEEtran}

\begin{IEEEbiography}[{\includegraphics[width=1in,height=1.25in, clip,keepaspectratio]{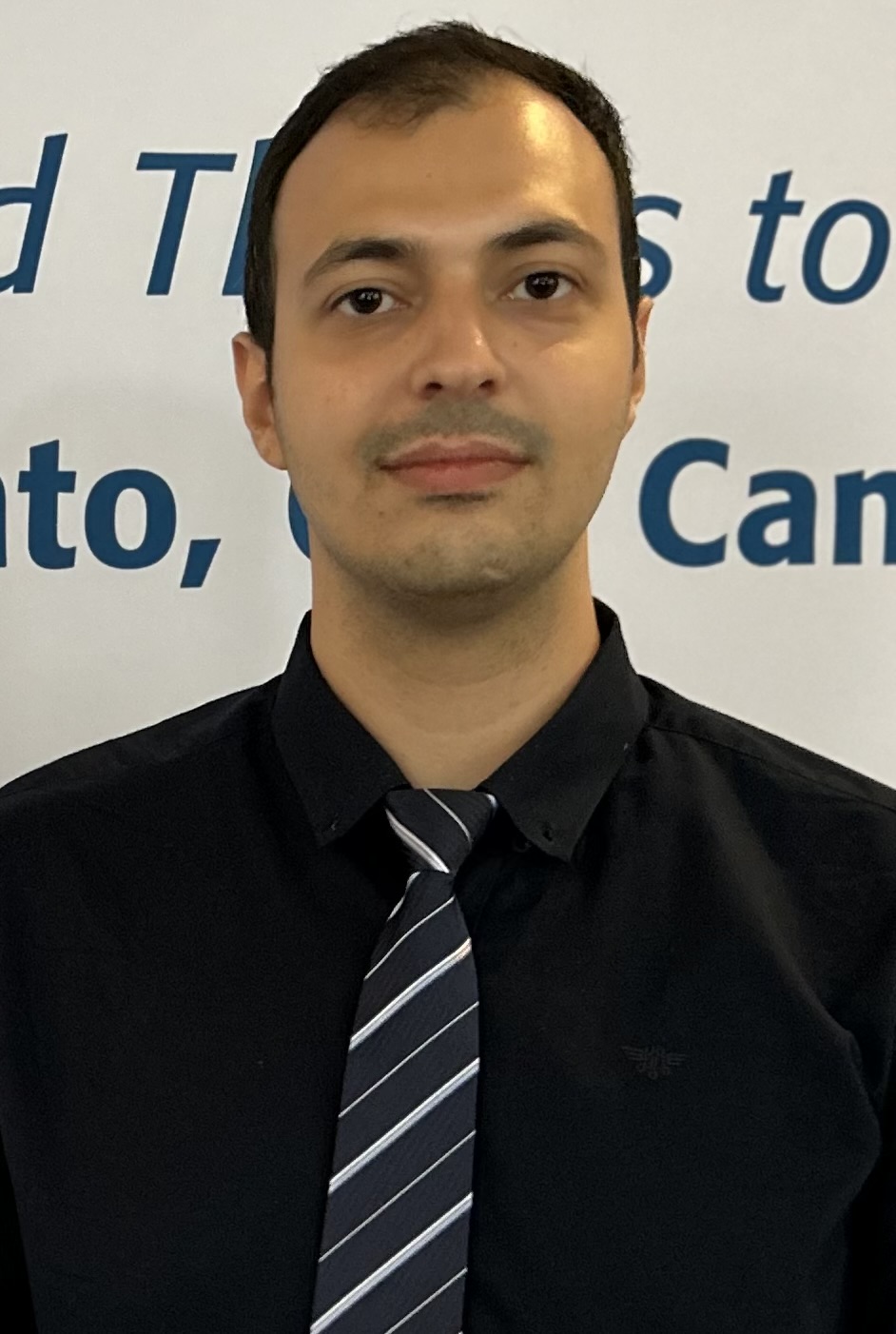}}]{\textbf{Mohammad Amin Saeidi}} (Member, IEEE) received the M.Sc. degree in Electrical Engineering–Communication Systems from Amirkabir University of Technology, Tehran, Iran, in 2021, and the Ph.D. degree in Electrical Engineering and Computer Science from York University, Canada, in 2026. He is currently a Postdoctoral Fellow at the University of Toronto, Canada. His research interests include quantum sensing, 6G wireless communications, reconfigurable intelligent surfaces, near-field communications, integrated sensing and communications, multi-band and terahertz communications, mobility management, and optimization. He has served as a reviewer for several IEEE journals, including the IEEE Transactions on Wireless Communications, IEEE Transactions on Communications, IEEE Transactions on Mobile Computing, IEEE Journal on Selected Areas in Communications, IEEE Open Journal of the Communications Society, IEEE Wireless Communications Letters, IEEE Communications Letters, and IEEE Transactions on Green Communications and Networking.
\end{IEEEbiography}

\begin{IEEEbiography}
[{\includegraphics[width=1in,height=1.2in,clip,keepaspectratio]{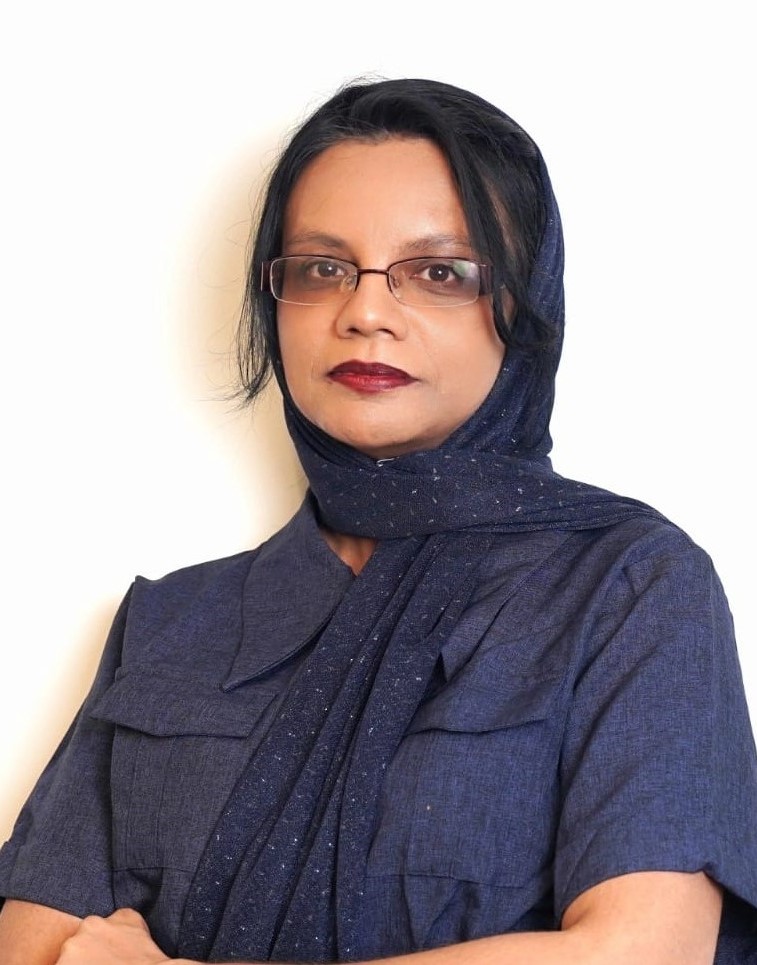}}]
{Hina Tabassum}
 (Senior Member, IEEE) received the Ph.D. degree from the King Abdullah University of Science and Technology (KAUST). She is currently an Associate Professor with the Lassonde School of Engineering, York University, Canada, where she joined in 2018. She holds the York Research Chair in 5G/6G-Enabled Mobility and Sensing Applications, serves as a Visiting Faculty at the University of Toronto, and was appointed Graduate Program Director of Electrical Engineering and Computer Science at York University in 2026. Her research focuses on the modeling, analysis, and optimization of next-generation wireless communication, localization, and sensing networks. She has co-authored two books, eight book chapters, and over 130 publications. She has delivered more than 20 invited tutorials and talks at international venues on wireless communications and sensing networks. She was selected as an IEEE Communications Society Distinguished Lecturer for 2025–2026 and has been listed among Stanford’s World’s Top 2\% Researchers from 2021 to 2025. Her recognitions include the N2Women STAR (2025), the Early Career Lassonde Innovation Award (2023), and the N2Women Rising Star Award (2022). She is currently an Area Editor for IEEE Open Journal of the Communications Society and IEEE Communications Surveys \& Tutorials, and an Associate Editor for IEEE Transactions on Communications and IEEE Transactions on Wireless Communications.
\end{IEEEbiography}

\end{document}